\newtheorem{theorem}{Theorem}[]
\newtheorem{lemma} [theorem]{Lemma}
\newtheorem{proposition} [theorem]{Proposition}
\newtheorem{corollary} [theorem]{Corollary}
\theoremstyle{definition}
\newtheorem{definition} [theorem]{Definition}
\newtheorem{rem}[theorem]{Remark}
\newtheorem{ex}[theorem]{Example}
\begin{document}

\title{A note on the distributions in quantum mechanical systems}
\author{Layth M. Alabdulsada}

\address{Institute of Mathematics, University of Debrecen, H-4002 Debrecen, P.O. Box 400, Hungary}
\email{layth.muhsin@science.unideb.hu}

\subjclass[2000]{58A30, 53C17, 81Q35, 93B05, 57R27}
\keywords{Distributions, affine distributions, Cartan decomposition, Lorentz group, quantum mechanical systems, controllability}

\bibliographystyle{alpha}
\begin{abstract}
In this paper, we study the distributions and the affine distributions of the quantum mechanical system.
Also, we discuss the controllability of the quantum mechanical system with the related question concerning the minimum time needed to steer a quantum system from a unitary evolution $U(0)=I$
of the unitary propagator to a desired unitary propagator $U_f$. Furthermore, the paper introduces a description of a $\mathfrak{k} \oplus \mathfrak{p}$ sub-Finsler manifold with its geodesics, which equivalents to the problem of driving the quantum mechanical system from
an arbitrary initial state $U(0)=I$ to the target state $U_f$, some illustrative examples are included.
We prove that the Lie group $G$ on a Finsler symmetric manifold $G/K$ can be decomposed into $KAK$.
\end{abstract}
\maketitle

\section{Introduction}
The research field that links geometric theories and quantum mechanics is yet very active research.
Geometric quantum mechanics is an attempt to reformulate quantum mechanics theory in the context of differential geometry. Differential geometry enters the quantum mechanics theory in at least two ways, the space of states and the space of time evolutions are the differential manifolds, see \cite{BH99} for an outstanding review of geometric quantum mechanics.
In the last two decades, there has been a resurgent interest  in the quantum mechanics systems with sub-Riemaniann (sub-Finsler) geometries. Excellent work can be found in \cite{ABB, JCG, KBG01, KGB02}. The objective of the controllability of quantum systems is to drive a dynamic system from an arbitrary initial state into a desired final state \cite{B14,Al07}, the interesting question about the controllability of quantum systems is steering the system in a minimum time \cite{KBG01, KGB02}.

 Let $M$ be a smooth $n$-dimensional manifold, and $k\leq n$. Suppose that for any $x \in M$, we specify a subspace of the tangent space of dimension $k$, i.e. $\mathcal{D} _{x} \subset T_{x}M$. Thus, for a neighbourhood $\mathcal{U}_{x}\subset M$ of $x$ there exist $k$ linearly independent smooth vector fields $X_{1}, \ldots, X_{k}$. Moreover, for each point $y \in \mathcal{U}_x,$
$$\mathcal{D}_{y} = \textrm{span} \{X_{1}(y), \ldots, X_{k}(y)\}.$$
Let $\mathcal{D}$ refer to the collection of all the linear $k$-dimensional subspaces  $\mathcal{D}_x$ for all  $x\in M$, then $\mathcal{D}$ is a {\it distribution} of rank $k$, or a $k$-{\it plane distribution} on $M$. The set of smooth vector fields $ \{X_{1}, \ldots, X_{k}\}$ is said to be a local basis of $\mathcal{D}$. Moreover, if now $X_{1}, \ldots, X_{k}, Y$ are  $k$ linearly independent (pointwise) vector
fields on $M$, and
the points of the distribution as a subset  $\mathcal{D}_x  \subset T_xM$ are given by
$$\mathcal{D}^+_x = Y(x)+ \mathcal{D}_x,$$
in other words,
$$\mathcal{D}^+_x = Y(x)+ \mathrm{span} \{X_1(x),X_2(x),\ldots,X_k(x)\}.$$
Then we label $\mathcal{D}^+_x$ as an  {\em affine distribution} of rank $k\leq n$, simply denoted by $\mathcal{D}_f$. The affine distribution $\mathcal{D}_f$ describes a control system since it catches all the possible directions of motion available at a definite point $x \in M$, up to a parametrization by control inputs (for more details, see \cite{PaSi02}).
In our recent work \cite{LA}, we studied in detail the distribution of the tangent and cotangent bundles. We also investigated the sub-Finslerian setting with nonholonomic mechanics. The study of affine distributions is more recent and less extensive, Clelland et al \cite{JCG} and Pappas et al \cite{PaSi02} provided a detailed exposition of affine distribution in the quantum mechanics systems.

This paper is organized as follows. In section 2, we have compiled some basic facts on the geometric description of the sub-Finsler manifolds and algebraic description of  Lie groups and Lie algebras in which we introduce some notions
which will play an essential role in this paper.
Next, the third section is devoted to the study of a homogeneous sub-Finsler manifold and a $\mathfrak{k} \oplus \mathfrak{p}$ sub-Finsler manifold.
Also, we show the decomposition of the Lie group $G$ as $KP$ if $G$ and its subgroup $K$ are connected such that $P= \exp (\mathfrak{p})$, and as $KAK$ on a Finsler symmetric manifold $G/K$.
Then, in section 4, we review in more general settings the quantum mechanics system with some examples. After that, section 5, is intended to motivate our investigation of the controllability of the quantum mechanical system in such a way we devoted to the study the controllability of quantum mechanical system in three cases: Firstly, the system has no drift hamiltonian.  Secondly,  a quantum system has a drift. Thirdly, the controllability of linear systems. It is also shown that each case mentioned above has a different distribution and how these distributions linked by the sub-Riemannian (sub-Finsler) geometries.

\section{Preliminaries}
Here, we define the geometric and algebraic notions that will be used throughout the paper.

\subsection{Geometric description of the sub-Finsler manifolds}

\begin{definition}\label{def}
A {\em sub-Finsler metric} is a generalisation of a sub-Riemannian metric on an $n$--dimensional smooth manifold $M$.
Let $F: \widetilde{\mathcal{D}}=\mathcal{D} \setminus\{0\}\rightarrow [0, +\infty)$ be a non-negative function on
a distribution $\mathcal{D}$ of rank $k \leq n$, which is a smooth
(vector) subbundle $\mathcal{D} \subset TM$ of the tangent bundle.

$F$  is  called  a sub-Finsler metric  in  $M$  if  it  satisfies  the  following  conditions:
\begin{itemize}

  \item [(1)] Smoothness: $F$ is $C^{\infty}$ on $\widetilde{\mathcal{D}}$;\\
   \item [(2)] Positive  homogeneity:  $F(\lambda v)=\lambda F(v)$ for all $v\in
    \widetilde{\mathcal{D}} \  \mbox{and}\ \lambda > 0$;\\
  \item [(3)] Strong  convexity: For  any  non-zero  vector  $v\neq 0$,  the  Hessian  matrix
formed  by  following
 $$
g_{ij} (x,v) = \frac{\partial^2F^2}{\partial v^i\partial v^j}(x,v), \qquad \mbox{for all}\ x \in M , v\in \widetilde{\mathcal{D}}_x,$$
is  positively  definite. Equivalently, the corresponding
indicatrix
$$ \Sigma_x=\{v\, | \, v \in \mathcal{D}_x, F(x,v)=1\}$$ is strictly convex.
\end{itemize}
A  differentiable manifold $M$ equipped with a sub-Finsler  metric  $F$ defined on a subbundle $\mathcal{D}$ of rank $k$ of a tangent bundle is called
a  {\em sub-Finsler  manifold}  or  sub-Finsler  space  denoted  by $(M, \mathcal{D}, F)$.
\end{definition}
The  Hessian  matrix for a sub-Finsler metric $F$ defines an inner product at the non-zero vector  $v\neq 0 \in \widetilde{\mathcal{D}}_x$, namely,
$$\langle u, w \rangle_v= g_{ij}(x,v)u^iw^j=\frac{\partial^2 F^2(v+tu+sw)}{\partial t \partial s}{\bigg|_{t,s=0}},$$
for any $u=u^i\frac{\partial}{\partial x^i}$ and $w=w^i\frac{\partial}{\partial x^i}$ in $\widetilde{\mathcal{D}}_x$,
the above inner product (bilinear form) on $\widetilde{\mathcal{D}}_x$ is called the {\em fundamental tensor}.
We note that if $\mathcal{D}= TM$, then this is the usual definition of a Finsler manifold \cite{BCS00}.

Let $(M, \mathcal{D}, F)$ be a sub-Finsler manifold and $x,  y \in M$. A piecewise smooth curve (trajectory) $\gamma :[0, T]\rightarrow M$ is said to be
\textit{horizontal}, or \textit{admissible} if $\dot{\gamma}(t)\in
\mathcal{D}_{\gamma(t)}$ for all $t\in [0, T]$, that is $\gamma(t)$ is
tangent to $\mathcal{D}$. The length of a piecewise
smooth horizontal curve $\gamma$ is
defined by
\begin{equation}\label{L1}
  \ell(\gamma)=\int_0^T F({\gamma(t)},\dot{\gamma}(t))dt.
\end{equation}
We can use the length to define the sub-Finslerian distance
$d(x,y)$ between two points $x,y \in M$ as in Finsler geometry:
$$ d(x,y)=\inf \{\ell(\gamma)| \  \gamma \ \mathrm{a\ piecewise\ smooth\
horizontal\ curve\ joining} \ x\ \mathrm{to}\ y\}.$$
 Furthermore, the sub-Finslerian distance is infinite, i.e. $ d(x,y)= \infty$  if there are no horizontal
curves between $x$ and $y$.

\begin{definition}
  The horizontal
curve $\gamma :[0, T]\rightarrow M$ is called a {\em length minimizing} (or simply a {\em minimizing}) geodesic if
it realizes the distance between its end points, that is, $\ell(\gamma) = d(\gamma (0), \gamma (T))$.
\end{definition}

Chow's theorem, also called the Chow-Rashevskii theorem, \cite{Chow39, Mo02}, answered a fundamental question of how we know these geodesic exist.
More precisely, given two points
$x$ and $y$ in a sub-Finslerian manifold, is there a geodesic joins them.
The answer for the above question depends whether the distribution is a bracket
generating or not.
To be more specific, if we have a bracket generating distribution on a connected
manifold $M$ then any two points in $M$ can be joined by a
horizontal path. A distribution $\mathcal{D}$ is called {\em bracket generating} if any local
frame $X_i$ of $\mathcal{D}$, together with all of its iterated Lie
brackets spans the whole tangent bundle $TM$, see \cite{Mo02}.

\subsection{Algebraic description}

Throughout this the paper, we assume $G$ a Lie group,
which is a connected group (smooth manifold) with the properties
that the map of the group multiplication and the inverse map are both smooth maps.
If $G$ is a Lie group, then $e \in  G$ will be the group identity element (we use
$I$ to denote the identity matrix when working with the matrix
representation of the group) and denote by $\mathfrak{g} = T_eG$ or Lie$(G)$
to be the Lie algebra of $G$.

Let $K$ be a (smooth submanifold) compact closed subgroup of a Lie group $G$.
Cartan's theorem asserts that every closed
subset $K$ of the Lie group $G$, which is its subgroup, is the Lie subgroup of the Lie
group $G$. Suppose Lie$(K)=\mathfrak{k}$ represent the
Lie algebra of $K$,
then the direct sum
decomposition of subalgebras $\mathfrak{k}$ and
$\mathfrak{p}$, denoted by
\begin{equation}\label{D}
  \mathfrak{g} = \mathfrak{k} \oplus \mathfrak{p}.
\end{equation}

We call the pair $({\mathfrak {g}},{\mathfrak {k}})$ {\em symmetric pair}
if there exists a Lie algebra
automorphism $\theta : G \to G$, such that $\theta^2 = I$,
where $\theta$ is the Cartan involution, and which has $\mathfrak {k}$ as its
$1$-eigenspace.
The direct sum decomposition (\ref{D}) well known as {\em Cartan decomposition} if
$\mathfrak{p}\subseteq \mathfrak{k}^{\bot}$ and $\mathfrak{k}\subseteq \mathfrak{p}^{\bot}$  with respect to the metric induced by the killing form  (\ref{Tr}). However,
if $\mathfrak{g}$ a semisimple Lie algebra, namely,  a Lie algebra $\mathfrak{g}$ is semisimple if
the only Abelian ideal in $\mathfrak{g}$ is \{0\} or
the killing form is non-degenerate on $\mathfrak{g}$,
then $\mathfrak{p}=\mathfrak{k}^{\bot}$ and $\mathfrak{k}=\mathfrak{p}^{\bot}$,
i.e. $\mathfrak{g}$ is the orthogonal sum of $\mathfrak{k}$ and $\mathfrak{p}$.
 Moreover, Cartan decomposition must satisfy the commutation relations, namely
 \begin{equation} \label{Cartan}
   [\mathfrak{k}, \mathfrak{k}] \subseteq \mathfrak{k}, \quad [\mathfrak{p}, \mathfrak{k}] \subseteq \mathfrak{p},  \quad [\mathfrak{p}, \mathfrak{p}] \subseteq \mathfrak{k}.
 \end{equation}
In addition, a pair $({\mathfrak {k}},{\mathfrak {p}})$ is called a {\em Cartan pair} of ${\mathfrak {g}}$.

 As an example for the Cartan decomposition one can take the Lorentz group $SO_0(n, 1)$,
 precisely, its Lie algebra $\mathfrak{so}(n, 1)$. The Lie algebra of the Lorentz group can be
 easily obtained by calculating the tangent vectors to curves $t \mapsto A(t)$, where $A(t)$ is a matrix, on $SO_0(n, 1)$
 through the identity $I$. Since $A(t)$ satisfies
 $$A^{\top}JA= J,  \qquad J= I_{n,1}= \begin{pmatrix} I_n & 0\\ 0 & -1 \end{pmatrix},$$
by differentiate and use the fact that $A(0) = I$, we obtain
 $$A'^{\top}J + JA' = 0.$$
 Consequently
 $$\mathfrak{so}(n, 1)= \{A \in M_{n+1}(\mathbb{R})|\ A^{\top}J + JA = 0\}.$$
 It follows that $JA$ is skew-symmetric accordingly to $J = J^{\top}$, and so
  $$\mathfrak{so}(n, 1)= \Bigg\{\begin{pmatrix} B & \lambda\\ \lambda^{\top} & 0 \end{pmatrix} \in M_{n+1}(\mathbb{R})\Bigg|\ \lambda \in \mathbb{R}^n, \ B^{\top}=-B \Bigg\}.$$
 It is appropriate to write $A^{\top}J + JA$ equivalent with $A^{\top}=-JAJ$ since $J^2=I$.
 Note that we can write each matrix $A \in \mathfrak{so}(n, 1)$  uniquely as
 $$\begin{pmatrix} B & \lambda\\ \lambda^{\top} & 0 \end{pmatrix}= \begin{pmatrix} B & 0\\ 0 & 0 \end{pmatrix} +\begin{pmatrix} 0 & \lambda\\ \lambda^{\top} & 0 \end{pmatrix},$$
 such that $\begin{pmatrix} B & 0\\ 0 & 0 \end{pmatrix}$ is skew-symmetric and $\begin{pmatrix} 0 & \lambda\\ \lambda^{\top} & 0 \end{pmatrix}$ is symmetric, further,
 both matrices are still in  $\mathfrak{so}(n, 1)$.
 Therefore, It is normal to identify that
 $${\mathfrak {k}}= \Bigg\{ \begin{pmatrix} B & 0\\ 0 & 0 \end{pmatrix}\Bigg|\ B \in M_{n}(\mathbb{R}),  \ B^{\top}=-B \Bigg\},$$
 and
  $${\mathfrak {p}}= \Bigg\{\begin{pmatrix} 0 & \lambda\\ \lambda^{\top} & 0 \end{pmatrix}\Bigg|\  \lambda \in \mathbb{R}^n \Bigg\}.$$
One can see at once clearly that both  $\mathfrak{k}$ and $\mathfrak{p}$ are subspaces (as vectors) of $\mathfrak{so}(n, 1)$.
However, on the one hand, $\mathfrak{k}$ is a Lie subalgebra isomorphic to $\mathfrak{so}(n)$,
on the other hand, since $\mathfrak{p}$ is not closed under the Lie bracket, so it is
not a Lie subalgebra of $\mathfrak{so}(n, 1)$.
 Nevertheless, (\ref{Cartan}) still valid.
 Therefore, the direct sum decomposition for the Lie algebra $\mathfrak{so}(n, 1)= \mathfrak{k} \oplus \mathfrak{p}$ is the Cartan decomposition.

 A {\em maximal Abelian subalgebra}, denoted by $\mathfrak{h}$, of the Lie subalgebra $\mathfrak{p}$ is
called a {\em Cartan subalgebra} of the symmetric pair $({\mathfrak {g}},{\mathfrak {k}})$, and the common dimension of all the maximal subalgebras is
called the {\em rank} of the decomposition (\ref{D}). As an example,
suppose that $\mathfrak{g}= \mathfrak{k} \oplus \mathfrak{p}$ is the Cartan decomposition
such that
$\mathfrak{g} = \mathfrak{su}(n)$, $\mathfrak{k} = \mathfrak{so}(n)$ and
 $$\mathfrak{p} = \{iS|S \ \mathrm{is}\ n \times n \ \mathrm{traceless \ real \ symmetric \ matrix}\},$$
 then $$\mathfrak{h} = \{iD|D \ \mathrm{is}\ n \times n \ \mathrm{traceless \ diagonal\ matrix}\}$$
 is a maximal abelian subalgebra contained in $\mathfrak{p}$, therefore
$$SU(n) = \{k_1 \exp(a)k_2|k_1, k_2 \in  SO(n), a \in\mathfrak{h}\}.$$
Let us turn to illustrate the action of Lie group $G$ on its Lie algebra $\mathfrak{g}$
by giving a brief exposition of its definition as follows:
 Let $G$ be a Lie group and $\mathfrak{g}$ its Lie algebra, the automorphism map on $G$
 $$\varphi_g : G \to G,$$
 given by the inner automorphism (conjugation)  $$ X \mapsto gXg^{-1}, \ \mathrm{for\ all} \ X \in \mathfrak{g}\ \mathrm{and} \ g \in G.$$
 In addition,  if we take the differential of $\varphi$ at the identity $e$ we get an
automorphism of the Lie algebra
 $$\mathrm{Ad}_g = ( \varphi_{g*})e : \mathfrak{g} \to \mathfrak{g}.$$
 Then for any Lie group G, the {\em adjoint representation} is defined by
 $$\mathrm{Ad} : G \to \mathrm{Aut}(\mathfrak{g}).$$

Now, let $G$ be a Lie group with Lie algebra ${\mathfrak {g}}$, then the exponential map defined by
$$\exp :{\mathfrak {g}}\rightarrow G.$$
 Given $g\in G, X\in \mathfrak{k}$, and consider the
one-parameter subgroup $ t \mapsto \exp tX$ of $K$
 \begin{equation*}
   \mathrm{Ad}_g(X)= gXg^{-1}\Big|_{t=0},
 \end{equation*}
 since $\exp (\mathrm{Ad}_{g}(X))= g (\exp X ) g^{-1},$  it follows that
 \begin{equation}\label{Ad}
   \mathrm{Ad}_g(\exp t X)= \frac{d}{dt} g(\exp t X)g^{-1}\Big|_{t=0},
 \end{equation}
 such that $XY-YX=[X,Y]=\mathrm{ad}(X)Y$ for any $X, Y \in \mathfrak{g}.$
 Therefore,
  \begin{equation}\label{ad}
  \mathrm{ad}_g(X)= \frac{d}{dt} \mathrm{Ad}_g(\exp t X),
  \end{equation}
where $\mathrm{ad}_g(X)$ is the {\em adjoint representation} of $\mathfrak{g}$.
The proof of the next theorem can be found in \cite{He01}, Chapter V.
\begin{theorem}\label{III}
  If $\mathfrak{h}$ and $\mathfrak{h'}$ are two maximal Abelian
subalgebras contained in $\mathfrak{p}$. Then
\begin{itemize}
  \item [(1)] There exists an element $X \in \mathfrak{h}$ whose centralizer in $\mathfrak{p}$ is equal to $\mathfrak{h}$.
  \item [(2)] There is an element $k \in  K $ such that $\mathrm{Ad}_k(\mathfrak{h}) = \mathfrak{h'}$.
  \item [(3)] $\mathfrak{p} =\bigcup_{k \in  K} \mathrm{Ad}_k(\mathfrak{h})$.
\end{itemize}
\end{theorem}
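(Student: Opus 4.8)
The plan is to carry out the classical argument in the semisimple setting fixed above, in which the Killing form $B$ supplies an $\mathrm{Ad}(K)$-invariant symmetric bilinear form that is positive definite on $\mathfrak{p}$ and negative definite on $\mathfrak{k}$, with $\mathfrak{k}$ and $\mathfrak{p}$ orthogonal. Write $\langle\cdot\,,\cdot\rangle$ for $B$ restricted to $\mathfrak{p}\times\mathfrak{p}$. I will use two elementary consequences of the invariance of $B$: first, $\mathrm{ad}(T)$ is skew with respect to $B$ for every $T\in\mathfrak{g}$, so in particular $\langle[T,A],A'\rangle=-\langle A,[T,A']\rangle$ for $T\in\mathfrak{k}$ and $A,A'\in\mathfrak{p}$ (the brackets remaining in $\mathfrak{p}$ by (\ref{Cartan})); second, twisting $B$ by the Cartan involution $\theta$ makes $\mathrm{ad}(X)$ self-adjoint for every $X\in\mathfrak{p}$ with respect to a positive definite form.

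First I would establish (1). The operators $\mathrm{ad}(H)$, $H\in\mathfrak{h}$, commute and, being simultaneously self-adjoint for the $\theta$-twisted inner product, are simultaneously diagonalizable over $\mathbb{R}$; this yields a restricted root space decomposition $\mathfrak{g}=\mathfrak{g}_0\oplus\bigoplus_{\lambda\in\Lambda}\mathfrak{g}_\lambda$ with $\Lambda$ a finite set of nonzero functionals on $\mathfrak{h}$ and $\mathfrak{g}_0=Z_{\mathfrak{g}}(\mathfrak{h})$. Maximality of $\mathfrak{h}$ in $\mathfrak{p}$ forces $\mathfrak{p}\cap\mathfrak{g}_0=Z_{\mathfrak{p}}(\mathfrak{h})=\mathfrak{h}$, since any $Y\in\mathfrak{p}$ centralizing $\mathfrak{h}$ produces an abelian subalgebra $\mathfrak{h}+\mathbb{R}Y\subseteq\mathfrak{p}$. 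Now choose $X\in\mathfrak{h}$ outside the finitely many hyperplanes $\{\lambda=0\}$, $\lambda\in\Lambda$; such $X$ exist because a real vector space is not a finite union of proper subspaces. For this $X$ one checks directly that $Z_{\mathfrak{g}}(X)=\mathfrak{g}_0$, hence $Z_{\mathfrak{p}}(X)=\mathfrak{p}\cap\mathfrak{g}_0=\mathfrak{h}$. Call such an $X$ \emph{regular}.

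The heart of the matter is (2), where compactness of $K$ substitutes for the compactness of $G$ that is unavailable in the noncompact case. Fix regular elements $X\in\mathfrak{h}$ and $X'\in\mathfrak{h}'$ furnished by part (1), and consider the continuous function $f\colon K\to\mathbb{R}$ given by $f(k)=\langle\mathrm{Ad}_k(X),X'\rangle$. By compactness $f$ attains a maximum at some $k_0\in K$; put $A:=\mathrm{Ad}_{k_0}(X)\in\mathfrak{p}$. Differentiating $t\mapsto f(\exp(tT)\,k_0)$ at $t=0$ for arbitrary $T\in\mathfrak{k}$ gives the critical-point identity $\langle[T,A],X'\rangle=0$. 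Applying this with the particular choice $T=[A,X']$, which lies in $\mathfrak{k}$ by the relation $[\mathfrak{p},\mathfrak{p}]\subseteq\mathfrak{k}$, and shuffling brackets via the invariance of $B$, one is forced to $B([A,X'],[A,X'])=0$; since $-B$ is positive definite on $\mathfrak{k}$ and $[A,X']\in\mathfrak{k}$, this gives $[A,X']=0$. Regularity of $X'$ then says $Z_{\mathfrak{p}}(X')=\mathfrak{h}'$, so $A\in\mathfrak{h}'$; as $\mathfrak{h}'$ is abelian, $\mathfrak{h}'\subseteq Z_{\mathfrak{p}}(A)=\mathrm{Ad}_{k_0}\!\big(Z_{\mathfrak{p}}(X)\big)=\mathrm{Ad}_{k_0}(\mathfrak{h})$. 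Finally $\mathrm{Ad}_{k_0}(\mathfrak{h})$ is an abelian subalgebra of $\mathfrak{p}$ containing $\mathfrak{h}'$, so maximality of $\mathfrak{h}'$ upgrades the inclusion to $\mathrm{Ad}_{k_0}(\mathfrak{h})=\mathfrak{h}'$; thus $k=k_0$ works.

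Part (3) then follows quickly: given $Y\in\mathfrak{p}$, the line $\mathbb{R}Y$ is an abelian subalgebra of $\mathfrak{p}$, hence lies in some maximal abelian subalgebra $\mathfrak{h}''$ of $\mathfrak{p}$ (obtained by extending by dimension), and by (2) there is $k\in K$ with $\mathfrak{h}''=\mathrm{Ad}_k(\mathfrak{h})$, so $Y\in\mathrm{Ad}_k(\mathfrak{h})$; conversely $\bigcup_{k\in K}\mathrm{Ad}_k(\mathfrak{h})\subseteq\mathfrak{p}$ because each $\mathrm{Ad}_k$ preserves $\mathfrak{p}$. The step I expect to be the real obstacle is the bracket bookkeeping in (2): one must verify that the inner product used is genuinely $\mathrm{Ad}(K)$-invariant, keep every commutator on the correct side of (\ref{Cartan}), and track the signs in the invariance identity carefully enough that the final quantity $B([A,X'],[A,X'])$ is actually forced to vanish — this is where an otherwise routine computation can go wrong.
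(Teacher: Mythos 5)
Your proof is correct and is essentially the argument the paper points to: the paper gives no proof of its own but defers to Helgason, Chapter V, and your regular-element construction for (1), the variational argument maximizing $\langle\mathrm{Ad}_k(X),X'\rangle$ over the compact group $K$ for (2), and the extension-to-maximal-abelian step for (3) reproduce that classical proof faithfully, with the key identity $0=B([T,A],X')=B([A,X'],[A,X'])$ and the negative definiteness of $B$ on $\mathfrak{k}$ handled correctly.
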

By the above theorem, the maximal Abelian subalgebras of $\mathfrak{p}$ are all conjugate
by $\mathrm{Ad}_k$.
Moreover, any two maximal flat totally geodesic submanifolds of $M$ must be congruent under the group
 of isometries of $M$ \cite{De12}. Also, if we assume $G$ a connected and compact Lie group such that $\mathfrak{g}$ its Lie algebra, then Theorem \ref{III}
 asserts that any two Cartan subalgebras of a complex semisimple Lie algebra are congruent under the group of automorphisms.
\begin{definition}
Let $G$ be a Lie group and $K$ be a compact closed subgroup of $G$,
the pair $(G,K)$ is called a {\em Klein geometry} where
the (left) coset space $G/K$ is connected.
For any Klein geometry $(G,K)$, the quotient space $M = G/K$ is a smooth manifold of dimension
$$\mathrm{dim} M = \mathrm{dim} G -\mathrm{dim} K.$$
In particular, there is a natural smooth left action of $G$ on $M$ such that the canonical mapping (projection) $\pi : G \to M$ is smooth, given by
$$L_g: G \times M \to M, \quad g_1 \cdot g_2 K:= (g_1g_2)K\ \mathrm{where} \ g_1, g_2 \in G,$$
 is transitive.
Moreover, $K$ acts on $G$ from the right by
$$\nu : G \times K \to G, \quad g\times k \mapsto gk, \quad g \in  G, k \in  K.$$
The quotient space $M $  can be done by the equivalence relation
$$g_1\sim g_2 \Leftrightarrow g_1=\nu(g_2, k), \quad \mathrm{for \ some} \ k \in K.$$
\end{definition}
The largest subgroup $\mathcal{K}$ of $K$ that is normal in $G$ is called the {\em kernel} of a Klein geometry.
The Klein geometry $(G, K)$ is called {\em locally effective} if the kernel is discrete
while is called an {\em effective} if the kernel is trivial, therefore, $K$ is compact.
In fact, $K$ is compact by the Ascoli-Arzel\`{a} theorem, as it is the stabilizer of the origin in $G/K$.
The kernel of a Klein geometry as defined above is well-defined, and
if we have a Klein geometry $(G,K)$ with kernel $\mathcal{K}$, then there is an associated effective Klein
geometry $(G/\mathcal{K},K/\mathcal{K})$ which gives the same smooth manifold, i.e. $$M= (G/\mathcal{K},K/\mathcal{K})\cong (G/K).$$
An effective Klein geometry is frequently  called a {\em homogeneous manifold}, i.e.
if $G$ a Lie group and $K \subset G$ a compact closed subgroup, we call the elements $g_1, g_2 \in G$
a congruent modulo $K$ if $g_1K=g_2K$ such that this is an equivalence relation to the equivalence classes being left cosets modulo $K$
and by this equivalence relation the quotient $(G/K)$ is called a {\em homogeneous space}.

Choose a $G$-invariant subbundle $\mathcal{D}$ (our hypothesis is always based on $\mathcal{D}$ is a bracket generating) of the tangent bundle $TM$ of $M=G/K$,
that is $\mathcal{D}$ of vector subspaces $\mathcal{D}_p\in T_pM$ for every $p\in M$ satisfying
$$(d\mu_{g})_{p}\mathcal{D}_p\subset \mathcal{D}_{\mu_{g(p)}}, \qquad \mathrm{for \ all} \ g \in G \ \mathrm{and }\ p \in M,$$
such that the diffeomorphism  map $\mu_g : M \to M$ defined by
the induced (isomorphism) tangent map  $(d\mu_{g})_{p} : T_pM \to T_{\mu_{g}(p)}M$ at $p \in M$ and $g \in G$.
 If the $G$-invariant subbundle $\mathcal{D}$ of
$TM$, where $M=G/K$ is a homogeneous space, then under this assumption
we can say that $\mathcal{D}$ is analytic (see, \cite{He01}).
Moreover, assume that $g \in G$, then the diffeomorphism given as
\begin{equation}\label{IS}
  \psi_{g}: G/K \to G/K, \ \ \  kK \longmapsto gkK.
\end{equation}
In general, $K$ will be called the
{\em isotropy subgroup}.

 The choice of the subbundle $\mathcal{D}\subset TM$ can be interpreted as follows. First of all,
 we can see an one-to-one- correspondence between $\mathrm{Ad}(K)$-invariant subspaces $V \in \mathrm{Lie}(G)$
 containing $\mathrm{Lie}(K)$, and $K$-invariant subspaces $\mathcal{D}_K \subset T_KM$.
Consider a subspace $\mathcal{D}_K \subset T_KM$, consequently, a
corresponding $V$ in $\mathrm{Lie}(G)$. Thus, the subbundle $\mathcal{D}$ is defined as
$$\mathcal{D}_{gK}:= (d\psi_{g})_K \mathcal{D}_K,$$
 for any $gK \in G/K$, where $\psi_{g}$ is given in (\ref{IS}).
Since $\mathcal{D}_K$ is $K$-invariant therefore the above expression does not depend on the representative in $gK$ which means the subbundle is well defined.
Let $V$ of $\mathrm{Lie}(G)$ be the subspace that is associated with $\mathcal{D}_K$ satisfies
the property that Lie$(G)$ is the smallest Lie subalgebra of Lie$(G)$ containing $V$, then
$V$ is called a {\em bracket generating}. Consider there exist a $G$-invariant norm on $\mathcal{D}$
if $K$ is compact, such that if we choose a seminorm on $V$ that is $\mathrm{Ad}(K)$-invariant with the property that
the kernel is $\mathfrak{k}$.
The natural projection $\pi:G \to G/K$ allows a $G$-invariant norm on $\mathcal{D}_K$ denoted by $\|\cdot\|$.
Therefore, we have an induced $G$-invariant norm on $\mathcal{D}$ as follows
$$\|v\|= \|(d\psi_{g^{-1}})_{gK}v\|, \qquad \mathrm{for \ all }\ v \in \mathcal{D}_{gK}.$$
The above equation is independent of the choice of the representative in $gK$ because the initial norm is $\mathrm{Ad}(K)$-invariant, see \cite{DoOt16}.

\section{Homogeneous sub-Finsler manifolds}
\begin{definition} \label{T6}
  The triple $(M, \mathcal{D}, F)$ is said to be a {\em homogeneous sub-Finsler manifold} where $M= G/K=\{gK:g\in G\}$ a smooth manifold
  such that $G/K$ is a homogeneous space formed by a Lie group $G$ modulo a compact subgroup $K$.
  Then we have a bracket generating $G$-invariant subbundle $\mathcal{D}\subset TM$, and $G$-invariant norm defined on $\mathcal{D}$.
  \end{definition}
  One can consider a sub-Finslerian metric by choosing a (semi) sub-Finslerian metric on a bracket generating subapsce $V$
  that is $\mathrm{Ad}(K)$-invariant and for which the kernel is $\mathfrak{k}$.
  We deduce that any sub-Finsler homogeneous manifold has the form $G/K$ with $G$ a Lie group and $K$ a compact subgroup, according to the fact that the isometry group of a homogeneous manifold is a Lie group.

  As in general case,
  we call the homogeneous sub-Finsler manifold $G/K$ {\em reductive}
  if there exists a subspace $\mathfrak{p}$ of the Lie algebra $\mathfrak{g}$ such that
  $$ \mathfrak{g}= \mathfrak{k} \oplus \mathfrak{p},\ \ \mathrm{where} \ \mathrm{Ad}(k)\mathfrak{p}\subset \mathfrak{p}, \forall k \in K,$$
  and we say that the above decomposition is a {\em reductive decomposition}.
  Further to the previous settings and their applications in the sub-Finslerian manifold. A $\mathfrak{k} \oplus \mathfrak{p}$ {\it sub-Finsler manifold} is a simple real Lie group $G$ of matrices with associated Lie algebra $\mathfrak{g}$
provided with such a sub-Finslerian structure.
For instance, on $G = SO_0(n, 1)$, consider the $G$-invariant subbundle $\mathcal{D} \subset \mathfrak{so}(n, 1)$ endowed with the $G$-invariant norm defined on $\mathcal{D}$.
In this case, we call the triple $(SO_0(n, 1), \mathcal{D}, \|\cdot\|)$ a $\mathfrak{k} \oplus \mathfrak{p}$ sub-Finsler manifold.

If $\mathfrak{k}$ satisfies the Cartan decomposition conditions, then it is a symmetric subalgebra. The corresponding
subgroup $K$ of $G$ is a symmetric subgroup and the coset space $G/K$ a symmetric
space. 
Furthermore, if we have an inner product, given by the (bilinear) killing form
  \begin{equation}\label{Tr}
    B(X, Y)= \mathrm{Tr}\big(\mathrm{ad}(X)\circ \mathrm{ad}(Y)\big),\quad X,Y \in  \mathfrak{g} \ \mathrm{and} \ \mathrm{ Tr \ is \ the \ trace },
  \end{equation}
  on $\mathcal{D}$ that is invariant under the action of $K$ since it is a compact, then the metric is a sub-Riemannian or a Riemannian (see for instance \cite{Mo02}). Consequently, we have a {\em Riemannian symmetric space}.

\begin{theorem}\cite{VB89} \label{Fin}
  Let $M$ be a Riemannian symmetric space, $G$ be the largest connected group of
isometrics of $M$, $K$ be the stabilizer of the group $G$ at the point $o \in M$. Then any $G$-invariant
intrinsic metric on $G/K$ will be Finsler.
\end{theorem}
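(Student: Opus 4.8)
The plan is to reduce the statement to the infinitesimal structure of an invariant intrinsic metric at the base point and then to exploit that a symmetric space leaves no room for a proper invariant distribution. Write $\mathfrak{g}=\mathfrak{k}\oplus\mathfrak{p}$ for the Cartan decomposition attached to $(G,K)$, so that the isotropy representation identifies $T_oM$ with $\mathfrak{p}$ and $\mathrm{Ad}(K)$ acts on $\mathfrak{p}$ by isometries of an auxiliary inner product (recall $K$ is compact, being the stabiliser of a point in the isometry group). Let $\rho$ be an arbitrary $G$-invariant intrinsic metric on $M=G/K$. The first step is to show that $\rho$ linearises at $o$: for $v\in\mathfrak{p}$ set
$$F_o(v)=\lim_{t\to 0^{+}}\frac{\rho\bigl(o,\pi(\exp tv)\bigr)}{t},\qquad \pi:G\to G/K \text{ the projection},$$
and prove that this limit exists in $[0,+\infty]$, that $F_o$ is positively homogeneous of degree one and subadditive on the subspace $\mathfrak{m}:=\{v\in\mathfrak{p}:F_o(v)<\infty\}$, that $\mathfrak{m}$ and $F_o$ are $\mathrm{Ad}(K)$-invariant, and — the crux — that $\rho$ coincides with the Carnot--Carath\'{e}odory distance of the $G$-invariant subbundle $\mathcal{D}\subset TM$ with $\mathcal{D}_o=\mathfrak{m}$, equipped with the $G$-invariant seminorm induced by $F_o$ (in the spirit of the construction recalled after (\ref{IS})). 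This is precisely Berestovskii's structure theorem for homogeneous spaces admitting an invariant intrinsic metric \cite{VB89}; it uses the local compactness of $G/K$, the compactness of $K$, and the Lie group structure of $G$.

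Next, I would observe that $\mathcal{D}$ is necessarily bracket generating: otherwise, by the orbit theorem, the reachable set of the associated control system through $o$ would be a proper immersed submanifold of $M$, and a point outside it would lie at infinite $\rho$-distance from $o$, contradicting that $\rho$ is a genuine finite metric.

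Third — and here the symmetric-space hypothesis is essential — I claim $\mathfrak{m}=\mathfrak{p}$, i.e.\ $\mathcal{D}=TM$. Using the commutation relations (\ref{Cartan}) together with the de~Rham decomposition of $M$, the isotropy module $\mathfrak{p}$ is a direct sum of pairwise non-isomorphic $\mathrm{Ad}(K)$-irreducible submodules, one for each de~Rham factor; hence every $\mathrm{Ad}(K)$-invariant subspace of $\mathfrak{p}$ is the partial sum of some of these. If $\mathfrak{m}$ omitted even one summand $\mathfrak{p}_i$, then no horizontal curve could ever move the coordinate along the $i$-th factor and $\mathcal{D}$ would fail to be bracket generating, contrary to the previous step. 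Therefore $\mathfrak{m}=\mathfrak{p}$, so $F_o$ is a genuine norm on all of $T_oM$; transporting it by $G$-invariance, $F\bigl((d\psi_g)_o v\bigr):=F_o(v)$ (well defined since $F_o$ is $\mathrm{Ad}(K)$-invariant), we obtain a $G$-invariant Finsler metric on $M$ whose length functional (\ref{L1}) recovers $\rho$. Thus $\rho$ is Finsler.

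I expect the main obstacle to be the first step: establishing that the infinitesimal limit $F_o$ exists and that $\rho$ equals exactly the length metric it generates, rather than behaving like a genuinely non-Finslerian intrinsic metric — this is the substantive content behind the citation \cite{VB89}, and it is where local compactness and the Lie group structure really enter. A secondary, purely terminological point is that the norm $F_o$ obtained in this way is only continuous and convex, so "Finsler" in the statement must be understood in the generalised (Busemann) sense rather than in the smooth, strongly convex sense of Definition \ref{def}.
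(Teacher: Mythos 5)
The paper offers no proof of Theorem \ref{Fin} at all: it is imported verbatim from Berestovskii \cite{VB89}, so there is no in-paper argument to compare yours against. Judged on its own terms, your sketch is the standard way to deduce the statement from Berestovskii's structure theorem, and its logical skeleton is sound; but be aware that your first step \emph{is} the cited theorem (every locally compact homogeneous space with an invariant intrinsic metric is a Carnot--Carath\'{e}odory--Finsler space), so most of the analytic content is still being outsourced to \cite{VB89} rather than proved. What you genuinely add is the symmetric-space reduction in steps 2 and 3, and your closing remark that ``Finsler'' here can only mean a continuous, convex (Busemann-type) norm rather than the smooth strongly convex metric of Definition \ref{def} is correct and worth making explicit.

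There is one concrete flaw in step 3. The claim that the isotropy module $\mathfrak{p}$ of a symmetric space decomposes into \emph{pairwise non-isomorphic} irreducible $\mathrm{Ad}(K)$-summands, so that every invariant subspace is a partial sum of de~Rham summands, is false in general: if two de~Rham factors are mutually isometric the corresponding summands are isomorphic and diagonal invariant subspaces appear. Fortunately the de~Rham detour is unnecessary. For any $\mathrm{Ad}(K)$-invariant subspace $\mathfrak{m}\subseteq\mathfrak{p}$ the relations (\ref{Cartan}) give $[\mathfrak{m},\mathfrak{m}]\subseteq\mathfrak{k}$ and $[[\mathfrak{m},\mathfrak{m}],\mathfrak{m}]\subseteq[\mathfrak{k},\mathfrak{m}]\subseteq\mathfrak{m}$, so the Lie algebra generated by $\mathfrak{m}$ equals $\mathfrak{m}+[\mathfrak{m},\mathfrak{m}]$ and meets $\mathfrak{p}$ exactly in $\mathfrak{m}$; hence $\mathcal{D}$ is bracket generating only if $\mathfrak{m}=\mathfrak{p}$, which combined with your step 2 finishes the argument. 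Replace the irreducibility claim by this two-line computation and the proof is complete modulo the citation.
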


\begin{proposition} \cite{PaSi02}
  An affine distribution $\mathcal{D}_f = X + \mathcal{D}$ on $M$ is invariant under a vector field $Z$
if and only if for every $Y \in  \mathcal{D}$:
$$[Z, Y] \subseteq Y.$$
\end{proposition}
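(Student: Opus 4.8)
The plan is to unwind the meaning of ``$\mathcal{D}_f$ is invariant under $Z$'' in terms of the local flow $\phi_t$ of $Z$, pass to the infinitesimal level via the Lie derivative, and then exploit that an affine distribution is a linear distribution $\mathcal{D}$ together with a single offset section $X$, so that everything splits into a statement about $\mathcal{D}$ and a statement about $X$. Concretely, ``invariant'' means $(d\phi_t)_x\big(\mathcal{D}_f(x)\big)=\mathcal{D}_f(\phi_t(x))$ for all $x$ and all sufficiently small $t$; equivalently, for every local section $W$ of $\mathcal{D}_f$ the field $(\phi_{-t})_*W$ is again a section of $\mathcal{D}_f$. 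Throughout I will read the displayed condition as: $[Z,W]$ is a section of $\mathcal{D}$ for every section $W$ of $\mathcal{D}_f$ --- equivalently, $[Z,X]\in\mathcal{D}$ together with $[Z,Y]\in\mathcal{D}$ for every section $Y$ of $\mathcal{D}$, since the sections of $\mathcal{D}_f$ are exactly the fields $X+Y$ with $Y$ a section of $\mathcal{D}$.

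For the ``only if'' direction: fix a point $x$ and a section $W$ of $\mathcal{D}_f$. By invariance the curve $t\mapsto(\phi_{-t})_*W(x)$ lies in the \emph{fixed} affine subspace $\mathcal{D}_f(x)=X(x)+\mathcal{D}_x$ of $T_xM$; differentiating at $t=0$ and using $\tfrac{d}{dt}\big|_{t=0}(\phi_{-t})_*W=\mathcal{L}_ZW=[Z,W]$, the velocity at $t=0$ must lie in the associated direction space $\mathcal{D}_x$, i.e.\ $[Z,W]\in\mathcal{D}$. For the converse, I would first establish the classical symmetry lemma for the \emph{linear} part: if $[Z,Y]\in\mathcal{D}$ for every section $Y$ of $\mathcal{D}$, then $(d\phi_t)_x\mathcal{D}_x=\mathcal{D}_{\phi_t(x)}$ for all small $t$. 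This is the step with content; one proves it by choosing a local frame $X_1,\dots,X_k$ of $\mathcal{D}$, completing it to a local frame of $TM$, writing $(\phi_{-t})_*X_i$ in this frame, and noting that $\tfrac{d}{dt}(\phi_{-t})_*X_i=(\phi_{-t})_*[Z,X_i]$ forces the components of $(\phi_{-t})_*X_i$ transverse to $\mathcal{D}$ to satisfy a homogeneous linear ODE in $t$ with zero initial data, hence to vanish identically. Granting this, the offset is handled in one line: the curve $t\mapsto(\phi_{-t})_*X(x)$ lives in $T_xM$, starts at $X(x)$, and has velocity $(\phi_{-t})_*[Z,X](x)$, which lies in $\mathcal{D}_x$ because $[Z,X]$ is a section of $\mathcal{D}$ and $(\phi_{-t})_*$ preserves $\mathcal{D}$ by the lemma; a curve in a vector space whose velocity always lies in the fixed subspace $\mathcal{D}_x$ and whose initial point is $X(x)$ stays in $X(x)+\mathcal{D}_x$. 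Hence for $W=X+Y$ a section of $\mathcal{D}_f$ we get $(\phi_{-t})_*W=(\phi_{-t})_*X+(\phi_{-t})_*Y\in\big(X+\mathcal{D}\big)+\mathcal{D}=\mathcal{D}_f$, which is exactly the invariance of $\mathcal{D}_f$ under $Z$.

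I expect the main obstacle to be precisely the linear symmetry lemma, since it is the point where a pointwise bracket condition must be upgraded to a statement about the flow --- the bookkeeping with the adapted frame and the transverse components needs a little care, but it reduces to a standard uniqueness argument for linear ODEs. A secondary point worth recording cleanly is that it suffices to verify the hypothesis on the finite generating set $\{X,X_1,\dots,X_k\}$ of $\mathcal{D}_f$: for a general section $X+\sum_i f_iX_i$ one has $[Z,X+\sum_i f_iX_i]=[Z,X]+\sum_i f_i[Z,X_i]+\sum_i(Zf_i)X_i$, and the last sum is automatically a section of $\mathcal{D}$, so membership in $\mathcal{D}$ propagates from the generators to all sections. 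Everything else is formal manipulation with $\mathcal{L}_Z$, the identity $\tfrac{d}{dt}(\phi_{-t})_*W=(\phi_{-t})_*[Z,W]$, and the elementary fact that the tangent vector to a curve in an affine subspace lies in its direction space.
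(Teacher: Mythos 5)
Your argument is correct, and there is nothing in the paper to compare it against: the proposition is stated with a citation to \cite{PaSi02} and no proof is given in the text. Your flow-based route (read invariance as $(d\phi_t)_x\mathcal{D}_f(x)=\mathcal{D}_f(\phi_t(x))$, differentiate via $\tfrac{d}{dt}(\phi_{-t})_*W=(\phi_{-t})_*[Z,W]$, prove the linear symmetry lemma for $\mathcal{D}$ by an adapted frame and a homogeneous linear ODE, then handle the offset $X$ as a curve in $T_xM$ with velocity in $\mathcal{D}_x$) is the standard argument and is essentially what the cited source does. Two remarks worth recording. First, the displayed condition ``$[Z,Y]\subseteq Y$'' is not well formed as printed; your reading --- $[Z,W]$ is a section of $\mathcal{D}$ for every section $W$ of $\mathcal{D}_f$, equivalently $[Z,X]\in\mathcal{D}$ together with $[Z,\mathcal{D}]\subseteq\mathcal{D}$ --- is the correct repair, and the clause on the offset is genuinely needed: with $\mathcal{D}=0$ and $X$ not flow-invariant, the condition on the linear part alone is vacuous while $\mathcal{D}_f=X$ fails to be invariant, so the literal statement would be false. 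Second, your observation that the bracket condition propagates from a finite generating set $\{X,X_1,\dots,X_k\}$ to all sections, because $[Z,X+\sum_i f_iX_i]=[Z,X]+\sum_i f_i[Z,X_i]+\sum_i(Zf_i)X_i$ and the last sum already lies in $\mathcal{D}$, is correct and makes the criterion checkable in practice.
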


\begin{lemma}\label{Le}
  Assume $K \subset G$ where $K$ is a subgroup of the Lie group $G$ with finite centre and their Lie algebra are $\mathfrak{k}$ and $\mathfrak{g}$, respectively. If $G$ and $K$ are connected, then $G=KP$, such that $P= \exp (\mathfrak{p})$.
\end{lemma}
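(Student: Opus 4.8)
The plan is to recognize $G=KP$ as the \emph{global Cartan (polar) decomposition} of $G$, obtained from the infinitesimal splitting $\mathfrak{g}=\mathfrak{k}\oplus\mathfrak{p}$ of (\ref{D}) by integrating the Cartan involution and exponentiating along the geodesics of the associated symmetric space. In other words, I want to show that the map $K\times\mathfrak{p}\to G$, $(k,X)\mapsto k\exp X$, is onto; injectivity (a diffeomorphism statement) is not asked for here, so completeness of a suitable metric will do the job.

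First I would lift the Cartan involution $\theta$ of $\mathfrak{g}$ to an involutive automorphism $\Theta$ of the connected group $G$ with $d\Theta_{e}=\theta$. Its fixed-point subgroup $G^{\Theta}$ has Lie algebra $\mathfrak{k}$, so, since $K$ is connected with Lie algebra $\mathfrak{k}$, one obtains $K=(G^{\Theta})_{0}$; in particular $K$ is closed in $G$. This is where the hypothesis that $K$ has finite centre is used: together with connectedness it guarantees that $K$ is compact, so that $M:=G/K$ is a smooth manifold carrying a $G$-invariant Riemannian metric (it exists by averaging any inner product on $T_{o}M$ over $K$, or explicitly from the $\mathrm{Ad}(K)$-invariant, sign-corrected Killing form $-B(X,\theta Y)$ on $\mathfrak{p}\cong T_{o}M$, which is positive definite there precisely because the commutation relations (\ref{Cartan}) make $(\mathfrak{g},\mathfrak{k})$ a Cartan pair).

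Next I would invoke the classical structure theory of symmetric spaces (\cite{He01}, Ch.\ IV--V): with this metric $M$ is a Riemannian globally symmetric space, it is complete because it is homogeneous, and the geodesics issuing from the base point $o=\pi(e)$, where $\pi:G\to G/K$ is the projection, are exactly the curves $t\mapsto\pi(\exp tX)$ with $X\in\mathfrak{p}$. Then Hopf--Rinow gives that every point of $M$ is joined to $o$ by a geodesic, hence is of the form $\pi(\exp X)$ for some $X\in\mathfrak{p}$. Thus for each $g\in G$ there is $X\in\mathfrak{p}$ with $\pi\big(\exp(-X)g\big)=o$, i.e.\ $\exp(-X)g\in K$ and $g\in\exp(X)K\subseteq PK$, so $G=PK$; applying inversion together with $P^{-1}=P$ (since $(\exp X)^{-1}=\exp(-X)$) and $K^{-1}=K$ yields the asserted $G=KP$.

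The step carrying the real content is the geodesic description together with completeness used at the end; both are standard and citable, so the genuine work is checking that the hypotheses feed correctly into them — positive definiteness of the invariant metric on $\mathfrak{p}$, and compactness of $K$ from the finite-centre assumption — plus the bookkeeping that the $K$ in the statement is exactly $(G^{\Theta})_{0}$, which the connectedness hypothesis secures so that $M$ is literally $G/K$. As an alternative, if $G$ is realized as a group of matrices in a faithful representation in which $\Theta$ is $g\mapsto(g^{*})^{-1}$, then $g=kp$ is the ordinary polar decomposition $p=(g^{*}g)^{1/2}$, $k=gp^{-1}$, and the argument reduces to verifying $p\in\exp\mathfrak{p}$ and $k\in K$; I would keep the symmetric-space route as the primary one since it requires no choice of representation.
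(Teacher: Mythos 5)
Your proposal is correct in substance but takes a genuinely different --- and considerably more complete --- route than the paper. The paper's own proof only records the infinitesimal setup: it defines $\mathfrak{k}$ and $\mathfrak{p}$ as the $\pm 1$ eigenspaces of $d\theta$, observes that $\theta$ fixes $K$ pointwise and sends each $p=\exp X\in P$ to $p^{-1}$, and then simply asserts $G=KP$; the surjectivity of $(k,X)\mapsto k\exp X$, which is the entire content of the lemma, is never argued. You supply exactly that missing step by the classical symmetric-space argument (\cite{He01}, Ch.~IV--VI): an $\mathrm{Ad}(K)$-invariant positive-definite form on $\mathfrak{p}$ gives a $G$-invariant metric on $M=G/K$, homogeneity gives completeness, the geodesics through the base point are $t\mapsto\pi(\exp tX)$ with $X\in\mathfrak{p}$, and Hopf--Rinow then yields $G=PK=KP$. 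What your route buys is an actual proof where the paper has only a setup; what the paper's sketch buys is nothing beyond the identities $\theta|_K=\mathrm{id}$ and $\theta(p)=p^{-1}$, which by themselves do not produce the decomposition. One caveat on your version: the claim that connectedness together with a finite centre forces $K$ to be compact is false in general ($SL(2,\mathbb{R})$ is connected with centre $\{\pm I\}$ and is not compact); what rescues it here is that the Killing form is negative definite on $\mathfrak{k}$, so $\mathfrak{k}$ is of compact type and Weyl's theorem applies once the centre is finite. Better still, compactness is not actually needed for your argument: closedness of $K$ (which you obtain from $K=(G^{\Theta})_{0}$) together with the $\mathrm{Ad}(K)$-invariance and positive definiteness of $B$ restricted to $\mathfrak{p}$ already suffices to run the Hopf--Rinow step, so you can drop the averaging over $K$ and the compactness discussion entirely.
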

\begin{proof}
  By using the action of the involutive automorphism $\theta: G \to G$,
  we can define $\mathfrak{k}$ and $\mathfrak{p}$ (and hence $K$ and $P$) as follows:
  For any $X \in \mathfrak{g}$  and consider the $+1$ and $-1$ eigenspaces of the derivative $d\theta: \mathfrak{g}\to \mathfrak{g}$,
  defined by $$ \mathfrak{k}=\{X \in \mathfrak{g} | \ d\theta(X)=X \},$$
  $$ \mathfrak{p}=\{X \in \mathfrak{g} | \ d\theta(X)=-X \},$$
   such that $\theta \neq I$ and $\theta^2 =I$.
   Applying the exponential map, these conditions become $\theta(k)=k$, $k \in K$. Further, for each $p \in P$
   $$\theta(P)= \theta(\exp \mathfrak{p})=  \exp\big(\theta(\mathfrak{p})\big)= \exp(-p)=p^{-1},$$
   on the involutive automorphism $\theta$ on $G$. Therefore, $G$ satisfies the decomposition $G= KP$.
\end{proof}
It is easy to see an automorphism on the Lie algebra $\mathfrak{so}(n, 1)$ of the Lorentz Group $SO_0(n, 1)$ that given earlier, as follows
 $$ \mathfrak{k}=\{A \in \mathfrak{so}(n, 1) | \ d\theta(A)=A \},$$
 and
  $$ \mathfrak{p}=\{A \in \mathfrak{so}(n, 1) | \ d\theta(A)=-A \},$$
  where the Cartan involution is $d\theta(A)=-A^{\top}=JAJ$ and involution $d\theta$ is the derivative at $I$ of the involutive isomorphism
   $\theta$ of
the group $SO_0(n, 1)$ also given by  $\theta(A)=JAJ$ and $A \in SO_0(n, 1)$.
\begin{theorem}
  Let $G/K$ be a Finsler symmetric manifold. If $\mathfrak{h}$ be a Cartan
subalgebra of the pair $(\mathfrak{g},\mathfrak{k})$ and define $A := exp(\mathfrak{h}) \subset G$. Then $G = KAK$.
\end{theorem}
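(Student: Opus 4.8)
The plan is to deduce the global $KAK$ decomposition from two ingredients already in hand: the polar-type decomposition $G=KP$ with $P=\exp(\mathfrak{p})$ of Lemma~\ref{Le}, and the conjugacy statement $\mathfrak{p}=\bigcup_{k\in K}\mathrm{Ad}_k(\mathfrak{h})$ of Theorem~\ref{III}(3). The Finsler structure on $G/K$ plays no role in this particular statement beyond guaranteeing that we are dealing with a symmetric pair $(\mathfrak{g},\mathfrak{k})$ satisfying the commutation relations \eqref{Cartan}; the argument itself is purely Lie-theoretic. The inclusion $KAK\subseteq G$ is trivial, so the whole content is the reverse inclusion $G\subseteq KAK$.

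First I would fix $g\in G$ and invoke Lemma~\ref{Le} to write $g=k_0 p$ with $k_0\in K$ and $p\in P=\exp(\mathfrak{p})$; by the very definition of $P$ this means $p=\exp(X)$ for some $X\in\mathfrak{p}$. Next, applying Theorem~\ref{III}(3) to the Cartan subalgebra $\mathfrak{h}$ of $(\mathfrak{g},\mathfrak{k})$, there exist $k_1\in K$ and $H\in\mathfrak{h}$ with $X=\mathrm{Ad}_{k_1}(H)$. Using the naturality identity $\exp(\mathrm{Ad}_{k_1}H)=k_1(\exp H)k_1^{-1}$ recalled before~\eqref{Ad}, one obtains
$$ p=\exp(X)=\exp\bigl(\mathrm{Ad}_{k_1}H\bigr)=k_1\,(\exp H)\,k_1^{-1}. $$
Writing $a:=\exp H\in\exp(\mathfrak{h})=A$ and combining, this gives
$$ g=k_0 p=(k_0 k_1)\,a\,(k_1^{-1})\in KAK, $$
since $k_0 k_1\in K$ and $k_1^{-1}\in K$. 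As $g$ was arbitrary, $G=KAK$.

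The step that needs the most care is not the computation above but the verification that the standing hypotheses are genuinely available: that $G$ and $K$ are connected with finite centre so that Lemma~\ref{Le} applies and $\exp$ indeed surjects $\mathfrak{p}$ onto $P$, and that the symmetric pair arising from the Finsler symmetric manifold $G/K$ really satisfies \eqref{Cartan}, which is the hypothesis under which Theorem~\ref{III} was proved. If one wanted the refined statement that the $\mathfrak{h}$-component is unique up to the Weyl group (rather than mere existence), that would require the additional argument that $\mathrm{Ad}_k(\mathfrak{h})\cap\mathfrak{h}$-orbits are controlled by $N_K(\mathfrak{h})/Z_K(\mathfrak{h})$; but for the set-equality $G=KAK$ as stated, the two-line reduction through Lemma~\ref{Le} and Theorem~\ref{III}(3) suffices, and I expect no real obstacle beyond bookkeeping.
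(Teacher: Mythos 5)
Your proposal is correct and follows essentially the same route as the paper: both arguments combine the decomposition $G=KP$ from Lemma~\ref{Le} with Theorem~\ref{III}(3) and the identity $\exp(\mathrm{Ad}_{k}H)=k(\exp H)k^{-1}$ to conclude $P\subseteq KAK$ and hence $G=KP=KAK$. Your element-wise write-up, exhibiting $g=(k_0k_1)\,a\,(k_1^{-1})$ explicitly, is in fact a cleaner rendering of the same computation the paper performs at the level of sets.
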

\begin{proof} $G/K$ is a Finsler symmetric manifold by Theorem \ref{Fin},
in Lemma \ref{Le}, we decompose $G$ to $K$ and $P$. However, we can decompose $K$ and $P$ further.
Consider $(K)_0$ is the identity component of  $K$,
by the completeness in the Finslerian metric (for more details we refer the reader to \cite{BCS00}, Chapter 6), with the help of Lemma \ref{Le} implies that
$$P= \exp (\mathfrak{p}).$$ Moreover, use Theorem \ref{III} (III), we get
$$P= \exp \Big(\bigcup_{k \in  K} \mathrm{Ad}_k(\mathfrak{h})\Big)= \bigcup_{k \in  K} \mathrm{Ad}_k(\exp (\mathfrak{h}))=\bigcup_{k \in  K} \mathrm{Ad}_k\big(A\big) = \mathrm{Ad}_K\big(A\big),$$
for a fix cartan subalgebra $A$. By (\ref{Ad})
$P\subset KAK$, i.e. $p=k_1gk_1^{-1}$ for any $k_1 \in K$ and $g \in A$. Thus $G =KP= KKAK = KAK$.
\end{proof}
Observe that, the space $G/K$ induced by  a union of maximal Abelian subgroups $\mathrm{Ad}_k(A)$, called {\em maximal tori}, \cite{KBG01}.
The set
$$\mathrm{Ad}_K(X_d)=\{\mathrm{Ad}_{k_1}(X_d)=k_1X_dk^{-1}_1\big| k_1 \in K \}\in \mathfrak{p}$$
is called the {\em adjoint orbit} of $X_d$, which representee the directions in $G/K$, which we can choose
to move directly.

Assume we have the following decomposition $\mathfrak{g}= \mathfrak{k} \oplus \mathfrak{p}$
and  $\mathfrak{h} \subset \mathfrak{p}$ represent the maximal
Abelian subalgebra containing $X_d$.
Let $\mathfrak{W}_{X_d}= \mathfrak{h} \cap \mathrm{Ad}_K(X_d)$ denote the maximal
commuting set contained in the adjoint orbit of $X_d$, where the {\em Weyl
orbit} of $X_d$ is the set of $\mathfrak{W}_{X_d}$.
We define the
convex hull of the Weyl orbit of ${X_d}$
with vertices given by the elements of the Weyl
orbit of ${X_d}$ as follows: $$\mathfrak{c}(X_d)= \{ \sum_{i=1}^{n} \beta_i X_i\big| \beta_i\geq 0, \  \sum \beta_i=1,\ X_i \in\mathfrak{W}_{X_d}\}.$$

\begin{theorem}
  If $\mathfrak{h} \subset \mathfrak{p} $ and $\Gamma: \mathfrak{p} \to \mathfrak{h}$
  is the orthogonal projection with respect to the
$G$-invariant norm. Then for any $X \in \mathfrak{h}$, we have
$\Gamma(\mathrm{Ad}_KX) = \mathfrak{c}(W\cdot X)$,
where $\mathfrak{c}$ denotes convex hull.
\end{theorem}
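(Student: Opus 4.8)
The plan is to recognize this as the linear (infinitesimal) Kostant convexity theorem for the symmetric pair $(\mathfrak{g},\mathfrak{k})$, and to establish the two inclusions $\Gamma(\mathrm{Ad}_K X)\subseteq\mathfrak{c}(W\cdot X)$ and $\mathfrak{c}(W\cdot X)\subseteq\Gamma(\mathrm{Ad}_K X)$ separately. Throughout, $\mathfrak{h}\subset\mathfrak{p}$ is a Cartan subalgebra of $(\mathfrak{g},\mathfrak{k})$ in the sense of Section~2, $W\cdot X$ denotes the Weyl orbit of $X$, with $W$ realized as $N_K(\mathfrak{h})/Z_K(\mathfrak{h})$ acting on $\mathfrak{h}$, and I would take the inner product on $\mathfrak{p}$ to be the one induced by the Killing form \eqref{Tr} (positive definite on $\mathfrak{p}$, $\mathrm{Ad}(K)$-invariant, nondegenerate on $\mathfrak{k}$, and such that each $\mathrm{ad}(Z)$, $Z\in\mathfrak{k}$, is skew-adjoint). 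That $\Gamma$ is the orthogonal projection onto $\mathfrak{h}$ means $\langle\Gamma Y,H\rangle=\langle Y,H\rangle$ for all $Y\in\mathfrak{p}$, $H\in\mathfrak{h}$. Since a compact convex set $C\subset\mathfrak{h}$ is recovered from its support function ($v\in C$ iff $\langle v,H\rangle\le\max_{c\in C}\langle c,H\rangle$ for every $H$), and since $\max_{c\in\mathfrak{c}(W\cdot X)}\langle c,H\rangle=\max_{w\in W}\langle wX,H\rangle$, both inclusions reduce to inequalities between linear functionals on $\mathfrak{h}$.

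For $\Gamma(\mathrm{Ad}_K X)\subseteq\mathfrak{c}(W\cdot X)$, fix $H\in\mathfrak{h}$ and consider the smooth function $f\colon K\to\mathbb{R}$, $f(k)=\langle\mathrm{Ad}_kX,H\rangle=\langle\Gamma(\mathrm{Ad}_kX),H\rangle$. As $K$ is compact, $f$ attains its maximum at some $k_0$; differentiating along $t\mapsto\exp(tZ)k_0$ and using $\mathrm{ad}$-invariance of the form gives, for all $Z\in\mathfrak{k}$, $0=\langle[Z,\mathrm{Ad}_{k_0}X],H\rangle=\langle Z,[\mathrm{Ad}_{k_0}X,H]\rangle$. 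By \eqref{Cartan}, $[\mathrm{Ad}_{k_0}X,H]\in[\mathfrak{p},\mathfrak{p}]\subseteq\mathfrak{k}$, so nondegeneracy of the Killing form on $\mathfrak{k}$ forces $[\mathrm{Ad}_{k_0}X,H]=0$. Choosing $H$ regular, whose centralizer in $\mathfrak{p}$ is $\mathfrak{h}$ by Theorem~\ref{III}(1), yields $\mathrm{Ad}_{k_0}X\in\mathfrak{h}$, and since elements of $\mathfrak{h}$ that are $\mathrm{Ad}(K)$-conjugate are $W$-conjugate (a standard fact for symmetric pairs, cf.\ the results behind Theorem~\ref{III}), we get $\mathrm{Ad}_{k_0}X=wX$ for some $w\in W$. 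Hence $\max_{k\in K}f(k)=\max_{w\in W}\langle wX,H\rangle$ for regular $H$, and then for all $H$ by continuity and density of the regular set; in particular $\langle\Gamma(\mathrm{Ad}_kX),H\rangle\le\max_{w}\langle wX,H\rangle$ for every $k$ and $H$, which is exactly the support-function criterion for membership in $\mathfrak{c}(W\cdot X)$.

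For the reverse inclusion, each vertex $wX$ lies in $\mathrm{Ad}_K X\cap\mathfrak{h}$ (because $W$ is realized inside $N_K(\mathfrak{h})$), and $\Gamma$ is the identity on $\mathfrak{h}$, so $W\cdot X\subseteq\Gamma(\mathrm{Ad}_K X)$; it then suffices to prove that $\Omega:=\Gamma(\mathrm{Ad}_K X)$ is convex. This is the crux, and I would obtain it via the Morse-theoretic mechanism common to all convexity theorems of Atiyah--Guillemin--Sternberg--Kostant type: $\Omega$ is compact and connected as a continuous image of $K$, and for each $H\in\mathfrak{h}$ the function $f_H(k)=\langle\mathrm{Ad}_kX,H\rangle$ is Morse--Bott on $K$ with all critical submanifolds of even index (the transverse Hessian is assembled from the $\mathrm{ad}$-action and pairs each restricted root with its negative), so every superlevel set $\{f_H\ge c\}$, hence the $\Gamma$-fiber over the face of $\Omega$ on which $\langle\cdot,H\rangle$ is maximal, is connected; Atiyah's convexity lemma then upgrades connectedness of all such fibers to convexity of $\Omega$. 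The main obstacle is precisely this last step: one must verify the even-index/minimal-degeneracy property carefully from the restricted root space decomposition of $\mathfrak{g}$ relative to $\mathfrak{h}$ and deal with the possibly disconnected fixed-point set. As a more self-contained alternative avoiding Morse theory, I would instead run Kostant's original induction on $\dim\mathfrak{h}$, using $\mathrm{Ad}$ of the one-parameter subgroups attached to restricted roots to slide a prescribed point of $\mathfrak{c}(W\cdot X)$ along line segments into $\Omega$, at the cost of a longer combinatorial argument over the Weyl chambers.
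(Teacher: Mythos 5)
The paper does not actually supply a proof of this theorem: it states only that the argument is ``similar to Kostant's proof'' and omits it, so there is nothing internal to compare your attempt against except the citation to \cite{Ko73}. Your outline is the correct and standard route to this statement (the linear Kostant convexity theorem for a symmetric pair), and the first inclusion $\Gamma(\mathrm{Ad}_K X)\subseteq\mathfrak{c}(W\cdot X)$ is argued completely and correctly: maximizing $\langle\mathrm{Ad}_kX,H\rangle$ over the compact group $K$, using invariance of the Killing form \eqref{Tr}, the relation $[\mathfrak{p},\mathfrak{p}]\subseteq\mathfrak{k}$ from \eqref{Cartan} to locate $[\mathrm{Ad}_{k_0}X,H]$ in $\mathfrak{k}$, and Theorem~\ref{III}(1) for regular $H$, is exactly how the support-function bound is obtained. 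The reverse inclusion, however, remains a plan rather than a proof: you correctly isolate the crux --- convexity of $\Gamma(\mathrm{Ad}_KX)$ --- and name the two standard mechanisms for it (the Morse--Bott/Atiyah connectedness argument with even indices read off from the restricted root space decomposition, or Kostant's original induction on $\dim\mathfrak{h}$ sliding along the $\mathfrak{sl}_2$-directions attached to restricted roots), but you carry out neither, and that verification is where essentially all the substance of the theorem lives. You should also record the hypothesis, implicit in your use of the Killing form, that $\mathfrak{g}$ is semisimple of compact type relative to the decomposition, so that the form is definite on $\mathfrak{p}$ and nondegenerate on $\mathfrak{k}$; the paper's statement is silent on this. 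As written, your text is a correct and well-oriented sketch rather than a complete proof --- which, to be fair, is already more than the paper itself provides.
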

\begin{proof}
  The proof is similar to Kostant's proof \cite{Ko73}, so we omit it.
\end{proof}

\section{Quantum mechanical systems}

In general, the physical systems are dynamic, i.e. they evolve over time.
In non-relativistic quantum mechanics, the time evolution
of a quantum system (e.g., an atom, a molecule, or a system of
particles with spin) can be described by a map $\psi : \mathbb{R} \to \mathbb{S}$, where
the domain $\mathbb{R}$ is the set of all real numbers and the range $\mathbb{S}$
a unit sphere in a complex separable Hilbert $\mathcal{H}$,
which is a complex valued function of the real variable called a {\em wavefunction} \cite{Do}.
The starting point is to let $\psi(0)$ be the initial state of the system, and $\psi(t)$ be the state vector at some
other time $t$. Now, the unitary state evolution of a quantum system
is given by
$$|\psi(t)\rangle= U(t)|\psi(0)\rangle,$$
where $U(t)$ unitary  (propagator) transformation.

In order to implement a certain quantum information processing task in a controlled quantum system, we consider the dynamics of the system, i.e. the time evolution operator can be defined through the time-dependent
Schr\"{o}dinger equation
\begin{equation}\label{T1}
  \frac{dU}{dt}(t) = - iH(t)U(t), \quad U(0)=I,
\end{equation}
where $H(t)$ is a (Hamiltonian) self-adjoint operator acting on a complex separable Hilbert space.
We can separate the Hamiltonian as
\begin{equation}\label{T2}
  H= H_d+ \sum_{j=1}^{m} u_jH_j,\ \qquad \ m < n,
\end{equation}
such that $H_d$ is the internal Hamiltonian part of the system
and it is called the {\em drift} or {\em free Hamiltonian} and the
second part of the Hamiltonian $\sum_{j=1}^{m} u_jH_j$ represent coherent manipulations from outside
called the
{\em control Hamiltonian}, where $u_j\in \mathbb{R}$ are controls that can be switched on and off,  see \cite{Al07}.
Moreover, $H_d$ and $H_j$ are traceless Hermitian matrices, see the example below.
\begin{ex}\label{in}
Consider the group of
unitary unimodular $2 \times 2$ complex matrices
  $$SU(2)= \Bigg\{\begin{pmatrix} \mu & \nu\\ -\bar{\nu} & \bar{\mu} \end{pmatrix} \in M_{2}(\mathbb{C})\Big|\ |\mu |^2 + |\nu|^2 =1 \Bigg\},$$
it is compact and simply connected.
 Its Lie algebra $\mathfrak{g}= \mathfrak{su}(2)$  of antiHermitian traceless $2 \times 2$ complex matrices described by
  $$\mathfrak{su}(2)= \Bigg\{\begin{pmatrix} i\mu & \nu\\ -\bar{\nu} & -i\bar{\mu} \end{pmatrix} \in M_{2}(\mathbb{C})\Big|\ \mu \in \mathbb{R},\ \nu \in  \mathbb{C} \Bigg\}.$$
The generators (basis) of $\mathfrak{su}(2)$ are $\{-i\sigma_x, -i\sigma_y, -i\sigma_z\}$ where
$\sigma_x, \sigma_y,$ and  $\sigma_z$ are matrices called Pauli matrices, given by
\begin{equation}\label{T8}
\begin{aligned}
\sigma_x ={} & \frac{1}{2} \begin{pmatrix} 0 & 1\\ -1 & 0 \end{pmatrix};
\sigma_y = \frac{i}{2} \begin{pmatrix} 0 & 1\\ 1 & 0 \end{pmatrix};
\sigma_z =\frac{i}{2} \begin{pmatrix} 1 & 0\\ 0 & -1 \end{pmatrix},
\end{aligned}
\end{equation}

they are traceless Hermitian, where $i$ is the imaginary unit, the matrices $\sigma_x, \sigma_y,$ and  $\sigma_z$
satisfy the commutation relations, namely,
$$[\sigma_x, \sigma_y]= \sigma_z, \ [\sigma_y, \sigma_z]= \sigma_x, \ [\sigma_z, \sigma_x]= \sigma_y$$
which completely describe the Lie algebra $\mathfrak{su}(2)$,
the matrix commutator defined as before (\ref{ad}), specifically $[X,Y]= XY-YX$.
The choice of the subspaces
$$\mathfrak{k}=\mathrm{span} \{\sigma_z\}, \qquad \quad \qquad  \mathfrak{p}=\mathrm{span} \{\sigma_x, \sigma_y\}$$
allows a Cartan decomposition for $\mathfrak{su}(2)$. Furthermore,
$\{\sigma_x, \sigma_y\}$ is an orthonormal frame, with respect to the Riemannian metric $g$ \cite{L.K1}, for the inner product (see Definition \ref{def})
restricted to $\mathfrak{p}$.
Now, on $G = SU(2)$, define the $G$-invariant subbundle $\mathcal{D}_x = x\mathfrak{p}$
endowed with the $G$-invariant norm defined on it.
Therefore, the triple $(SU(2), \mathcal{D}, \|\cdot\|)$ is a $\mathfrak{k} \oplus \mathfrak{p}$ sub-Finsler manifold.
\end{ex}

Turn back to the quantum system,
 Khaneja et al \cite{KBG01, KGB02} calculated the minimum
time it takes to steer this system (\ref{T1}) from a unitary evolution $U(0)=I$
of the unitary propagator to a desired unitary propagator $U_f$.
In other words,
if the Hamiltonian does not change in time, then the time evolution operator for time $t$
is the unitary operator $U(t)$.
Note that if we have  a bounded unitary operator $U(t)$, there exists a unique unitary operator $U^{\dag}(t)$ called the {\em adjoint} of $U(t)$ acting on $\mathcal{H}$.
However, a unitary operator satisfies the relation $U(t)U^{\dag}(t)=1.$

The problem we are ultimately interested in is to find the
minimum time required to transfer the density matrix from
the initial state $\rho(0)$ to a final state $\rho_f$. Thus, we will be
interested in computing the minimum time required to steer
the system
\begin{equation}\label{T4}
   \frac{dU}{dt}(t)= - i\left(H_d+ \sum_{j=1}^{m} u_j(t)H_j,\right)U(t),
\end{equation}
from identity, $U(0)=I$, to a final propagator $U_f$, this system defines a
bilinear control system, as it is linear both in the unitary operator  $U(t)$ and in the
control amplitudes $u_j(t) \in \mathbb{R}$. Observe that, if you have a finite $H_d$, then it will
take infinite time to get very far place, however, if you fix a final terminal point
then there will be a finite time to get there.
System (\ref{T4}) is exactly (respectively, approximately) controllable if every point of $\mathbb{S}$  can be
steered to (respectively, steered arbitrarily close to)
any other point of $\mathbb{S}$, by a horizontal curve of (\ref{T4}).

For a finite dimensional quantum mechanical system, if $n$ is the number of energy levels we have
$\mathcal{H} = \mathbb{C}^n$ and the state space $\mathbb{S}$ is the unit sphere  $\mathbb{S}^{2n-1} \subset \mathbb{C}^n.$
In this setting,  one can naturally associated with (\ref{T1}) (and consequently (\ref{T4})) is its lift on the
unitary group $SU(n)$, such that the solution is of the form
$$U(t)= g(t)U(I), \quad \ \mathrm{ with}\ g(t) \in SU(n).$$
Moreover, we can write (\ref{T1}) as
\begin{equation}\label{T5}
  \frac{d}{dt}g(t) = - iH(t)g(t), \quad U(0)=I,
\end{equation}
where $-iH(t)$ is a traceless skew
Hermitian matrix, that belongs to the Lie algebra $\mathfrak{su}(n)$ \cite{B14}.
The problem of controllability is understandable nowadays,
more precisely, the typical problem one is indeed interested in a quantum control
is that one wants to steer the state of a quantumical system from a given initial to a target state for each couple of points in $SU(n)$.
Furthermore,  we can actuate or manipulate the system by modulating the hamiltonian as a function of time.
In fact, the system is controllable if and only if the H\"{o}rmander's condition satisfies
$$\mathrm{Lie}\{iH_0, iH_1, \ldots , iH_m\} = \mathfrak{su}(n).$$
Let us consider the following
example of the subalgebra is $\mathfrak{su}(2)$ which is spanned by the multiples of the Pauli matrices (\ref{T8}).
Recall that the system Lie algebra $H$ of (\ref{T4}) is
defined as the smallest Lie subalgebra of $\mathfrak{g}$ containing $H_d, H_1,\ldots,H_m$, i.e. $H$ is the smallest
linear subspace $\{H_d, H_1,\ldots,H_m\}$ of  $\mathfrak{g}$, that contains $H_d, H_1,\ldots,H_m$ together with all the
iterated Lie brackets $$[H_d, H_i], [H_i, H_j ], [H_d, [H_i, H_j ]],\ldots.$$ It is easily seen, that the associated
Lie group $\exp(H)$ is the smallest subgroup of $G$ that contains $\mathbb{S}$, i.e. $\exp(H)$ is equal to the
system group $G$.

\section{Controllability in a quantum system}

Controllability is a major matter in system analysis before a control strategy is applied, and is
used to judge whether it is possible to control or stabilize the system.
In other words, it is the ability to steer a quantum system from a given
initial state to any final state, infinite time, using the
available controls. The controllability of the system (\ref{T4}) depends on some conditions that we will clarify in detail during this section.

First of all, the case of the quantum system that referred to earlier in (\ref{T4}) which has no drift hamiltonian $H_d$ instead it is just evolution in the unitary group
that can be switched on and off certain generators or hamiltonian $H_j$, such that the form of the system is as follows
\begin{equation}\label{T41}
   \frac{dU}{dt}= - i\left( \sum_{j=1}^{m} u_jH_j,\right)U \in \mathrm{span}\{iH_1U, \ldots , iH_mU\},
\end{equation}
then if these generators were rich enough to span the tangent space at each point of the unitary group,
then basically the problem of the controllability will be trivial which is better than we can steer
the system to the desired target point or not.
Now, if we assume that our unitary group is connected, then we can take two points and connect them by
a curve and follow this curve, more precisely, follow its tangent by choosing generators appropriately
with the required way.
The problem will become more interesting when this number of generators $H_j$ that we have
is actually much smaller than the tangent space of the unitary group, however, in this case
we can switch on and off these hamiltonian, one can produce a commutator or a new generator of these hamiltonian
as in the following procedure
\begin{equation}\label{NG}
\begin{aligned}
U(\delta t)= {} & \exp(iH_2\delta t)\exp(iH_1\delta t)\exp(-iH_2\delta t)\exp(-iH_1\delta t)\\
= {} & -(\delta t)^2 {[iH_1,iH_2]} \approx I- (\delta t)^2 \underbrace{[iH_1,iH_2]}_{\text{new generator}},
\end{aligned}
\end{equation}
such that the backward evolution $\exp(-iH_{1,2}\delta t)$
 is generated by letting the forward map $\exp(iH_{1,2}\delta t)$ evolve for sufficient period of time,
which might not lie in the span of these origin generators that one had at one's disposal,
which gives us a new direction of motion. Therefore, we can gen-generalized by taking such all iterate commutators
and generates the lie algebra such that if these all iterate commutators span the tangent space (the Lie algebra)
for the unitary group then one can show the system is controllable, these results goes by Chow's theorem \cite{Chow39}.

Concerting  $u_j$, note that we have the freedom to choose them positive and negative, i.e.
they can be made arbitrary large. So, if the control amplitude unbounded, then the point that can be reached, it can be reached with no time under our considering that $u_j$ was chosen large enough.
Consequently, if the system (\ref{T41}) is controllable then its controllable in arbitrary small amount of time.
To sum up the controllability of the system (\ref{T4}), if the Lie Algebra $\{-iH_j\}_{\mathfrak{g}}$ generated by
$\{-iH_j\}$ span the Lie algebra of the unitary group, then the system (\ref{T4}) is controllable.
The controllability of a driftless control system can described as a nonholonomic control system,
for more details and examples inspired by robotics, we refer the reader to \cite{LA}.

Second case, the controllability of a quantum system has a drift, i.e $H_0$,
that mostly takes the following form
\begin{equation}\label{T10}
   \frac{dU}{dt}= - i\left(H_0+ \sum_{j=1}^{m} u_jH_j,\right)U \in -iH_0U + \mathrm{span}\{iH_1U,\ldots, iH_mU\} ,
\end{equation}
if the Lie algebra $\{-iH_0, -iH_j\}_{\mathfrak{g}}$ span the Lie algebra of the unitary group, then the system is controllable.
However, one can fundamentally generate these commutators between $H_0$ and $H_j$ by getting
evolution negative in a direction of $H_0$ by forwarding enough evolution since the group
being compact.
 If the Lie algebra
span by $H_j$ is not a full unitary group, then the system is controllable this is due to the fact that we can not change the strength of $H_0$.
Keep in mind, no matter how long your control is, there is a minimum time to reach
anywhere, that time can not be shrunk down to be zero.
Regarding the system (\ref{T10}), one of the assumption to make the time is minimum for going from
the identity to the desired state by making the control $u_j$ arbitrary large in addition to a good
approximation in the setting when the strength of control Hamiltonian can be made much larger than
the drift Hamiltonian in the system. If $G$ is the unitary matrix group $SU(2^n)$. Then
the problem of finding the most efficient  radio-frequency (rf) pulse train required to evolve the system
to the desired state is therefore equivalent to the problem of finding a time optimal curve from $U(0) = I$ to the desired $U_f$. We can generate a subgroup $K$ from the control Hamiltonian $H_j$ defined by
 \begin{equation}\label{K}
  K\subset SU(2^n)=\exp(\{H_j\}_{\mathfrak{g}}),
 \end{equation}
where $\{H_j\}_{\mathfrak{g}}=\mathfrak{k}$ is the Lie algebra generated by elements $\{iH_0, iH_1, \ldots , iH_m\}$,
the subgroup $K$ is a collection of unitary propagators which can be generated
as long as the drift Hamiltonian $H_d$ is removed from the system (\ref{T4}).
We suppose the strength of
Hamiltonian controls could be arbitrarily made large.

If we assume that $\mathfrak{g}$ represent the Lie algebra of unitary group and $\mathfrak{k}$  be a subalgebra of $\mathfrak{g}$, which
correspond to the generator that our control Hamiltonian can be produced.
Then the problem of finding a minimum time to produce a propagator can be characterized this  minimum time in special settings
in which if we have the Cartan's decomposition $$\mathfrak{g}= \mathfrak{k} \oplus \mathfrak{p}$$ and this decomposition
satisfy the commutation relations, specifically,
$$[\mathfrak{k}, \mathfrak{k}] \subseteq \mathfrak{k}, \quad [\mathfrak{p}, \mathfrak{k}] \subseteq \mathfrak{p},  \quad [\mathfrak{p}, \mathfrak{p}] \subseteq \mathfrak{k}.$$

So, we can define an inner product by the natural killing form such that
$\mathfrak{p}$ and  $\mathfrak{k}$ are orthogonal to each other. Consequently, one has a
sub-Riemannian metric or sub-Finsler metric, see Example \ref{in}; additionally, the example below shows the expression of geodesics.

 \begin{ex}

 For $\mathfrak{k} \oplus \mathfrak{p}$ sub-Finsler manifolds, the strict abnormal extremals
are never optimal \cite{ABB,B08}. If we write the distribution at a point $x \in G$ and $\mathcal{D}_x=x\mathfrak{p}$,
the Hamiltonian system given by the Pontryagin's maximum principle is integrable concerning elementary functions.
In addition, we have the following expression for geodesics parametrized by arclength, starting at
time zero from $x_0$,
 \begin{equation}\label{Geo11}
   x(t)= x_0 e^{(A_{k}+A_{p})t} e^{-(A_{k})t}
 \end{equation}
such that $A_k \in \mathfrak{k}, A_p \in \mathfrak{p}$, as well as, $\langle A_p, A_p \rangle= \|A_p\|^2=1$.
This is a well known formula in the community \cite{ABB, B08, B14, RuSt15}. Note that the controls whose corresponding trajectories starting from $x_0$ are the normal Pontryagin extremals.

Based on Example \ref{in}, $\mu$ and $\nu$ are complex, the manifold of $SU(2)$ is the sphere $\mathbb{S}^3$ in $\mathbb{R}^4$.
So, we have
$$SU(2)\simeq \mathbb{S}^3= \Bigg\{\begin{pmatrix} \mu \\  \nu \end{pmatrix} \in \mathbb{C}^2\Big|\ |\mu |^2 + |\nu|^2 =1 \Bigg\},$$
by using the isomorphism
$$\eta : SU(2)\to  \mathbb{S}^3, \quad \begin{pmatrix} \mu & \nu\\ -\bar{\nu} & \bar{\mu} \end{pmatrix} \mapsto \begin{pmatrix} \mu \\  \nu \end{pmatrix}.$$
Now, by utilizing formula (\ref{Geo11}), we calculate the explicit expression of geodesics as follows:

First, consider an initial covector in $\mathfrak{su}(2)$ as $A_k \in \mathfrak{k}, A_p \in \mathfrak{p}$, and
 $$\lambda= \lambda(\theta, c)  = \cos(\theta)\sigma_x + \sin(\theta)\sigma_y + c\sigma_z, \qquad \theta \in \mathbb{S}^1, c \in \mathbb{R}.$$
The corresponding exponential map  for all $t \in \mathbb{R}$ is
\begin{equation*}
\mathrm{Exp}(\theta, c, t):= \mathrm{Exp}(\lambda(\theta, c), t) = e^{(\cos(\theta)\sigma_x + \sin(\theta)\sigma_y + c\sigma_z)t}e^{-(c\sigma_z )t}= \begin{pmatrix} \mu \\  \nu \end{pmatrix}
\end{equation*}
such that
\begin{equation*}
\begin{split}
\mu  &= \frac{c\sin(\frac{ct}{2})\sin(\sqrt{1+c^2}\frac{ct}{2})}{\sqrt{1+c^2}}+\cos(\frac{ct}{2})\cos(\sqrt{1+c^2}\frac{t}{2})\\
 & + i\bigg(\frac{c\cos(\frac{ct}{2})\sin(\sqrt{1+c^2}\frac{t}{2})}{\sqrt{1+c^2}}
- \sin(\frac{ct}{2})\cos (\sqrt{1+c^2}\frac{t}{2})\bigg),\\
\nu & = \frac{\sin(\sqrt{1+c^2}\frac{t}{2})}{\sqrt{1+c^2}} \bigg( \cos(\frac{ct}{2}+\theta)+i\sin(\frac{ct}{2}+\theta)\bigg).
\end{split}
\end{equation*}
To compute the explicit expression of geodesics for the shortened
Lorentz group $SO_0(2, 1)$, we choose an initial covector $\mathfrak{so}(2, 1)$ as $A_k \in \mathfrak{k}, A_p \in \mathfrak{p}$, and
 $$\kappa= \kappa(\theta, c)  = \cos(\theta)\sigma_x + \sin(\theta)\sigma_y - c\sigma_z, \qquad \theta \in \mathbb{S}^1, c \in \mathbb{R}.$$
Using formula (\ref{Geo11}), the corresponding exponential map  for all $t \in \mathbb{R}$ is
\begin{equation*}
\mathrm{Exp}(\theta, c, t):= \mathrm{Exp}(\kappa(\theta, c), t) = e^{(\cos(\theta)\sigma_x + \sin(\theta)\sigma_y - c\sigma_z)t}e^{(c\sigma_z )t}
= \begin{pmatrix} \mu \\  \nu \end{pmatrix}.
\end{equation*}
where $\mu$ and $\nu$ are given above.
Obviously, the geodesics for both groups, $SU(2)$ and the shortened
Lorentz group $SO_0(2, 1)$, are the same.
 \end{ex}

Let us now take a look at case of the controllability of linear systems with unbounded controls,
given by the form
\begin{equation}\label{T9}
  \frac{dU}{dt}= AU+Bu,
\end{equation}

such that $U$ is a vector in $\mathbb{R}^n$ (not a unitary propagator), $A$ and $B$ are skew Hermitian matrices, and $u$ is a vector amplitude
in such a way we can rewrite it as $u_jb_j$, so the system becomes
$$\frac{dU}{dt}= AU+\sum_{j}^{m}u_jb_j \in AU + \mathrm{span}\{b_1,\ldots, b_m\},$$
where $u_j$ number multiplying by the columns $b_j$ of the matrix $B$.
This system is said to be a completely controllable if there exists
an unconstrained control law that can steer the system $U$ from the identity to a target.
Even if the system has a drift, i.e. the system is controllable,
it takes arbitrary small
time to steer the system $U$ between points of interest. The solution for such a system
$$U(t)=e^{At}U(0)+\int e^{A(t-s)} B(s)u(s) ds,$$
where $U(0)=I$ is the vector at no time (time zero).

We will be interested in the Lie algebra $\mathfrak{u}(n)$ of skew-Hermitian $n \times n$ matrices considered as
a Lie algebra over the real field. For example, all matrices $- i( \sum_{j=1}^{m} u_j(t)H_j)$ in $(\ref{T41})$ are
in  $\mathfrak{u}(n)$. The subalgebra  $\mathfrak{su}(n)$ of  $\mathfrak{u}(n)$ will play an essential role. It includes
the matrices in  $\mathfrak{u}(n)$ with zero trace.

According to the above setting, we have the ability to module these controls,
i.e. in the sense we can make them positive and negative values.
Thus, it gives us the possibility  to go anywhere exactly.
In fact, we want to find out what is the minimum time takes to do this,
to simplify, the minimum time will only depend on $H_0$ and the target state.

\begin{corollary} \label{11}
  The quantum systems given above, namely, (\ref{T41}),(\ref{T10}) and (\ref{T9}) have the following distributions

  \begin{align}\label{T12}
    \mathcal{D} & = \mathrm{span}\{iH_1U, \ldots , iH_mU\}; \\
    \mathcal{D}_f & = -iH_0U + \mathrm{span}\{iH_1U, \ldots , iH_mU\}; \\
    \mathcal{D}_f & = AU + \mathrm{span}\{b_1,\ldots, b_m\},
  \end{align}
   respectively, where $\mathcal{D}$  is the distribution and $\mathcal{D}_f$ are the affine distributions.
\end{corollary}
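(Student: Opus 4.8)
The plan is to read off each distribution directly from the differential equation that defines the corresponding control system, using the definitions of distribution and affine distribution recalled in the Introduction. The key observation is that for a control system of the form $\dot U = f_0(U) + \sum_{j=1}^m u_j f_j(U)$ on a manifold, the set of admissible velocities at the point $U$ is precisely $f_0(U) + \operatorname{span}\{f_1(U),\dots,f_m(U)\}$; when the drift term $f_0$ is absent this is the linear subspace $\operatorname{span}\{f_1(U),\dots,f_m(U)\}$, i.e. a genuine (linear) distribution, and when $f_0\neq 0$ it is an affine subspace of $T_UM$, i.e. an affine distribution in the sense $\mathcal{D}_f = Y(U) + \mathcal{D}_U$ with $Y=f_0$ and $\mathcal{D}_U = \operatorname{span}\{f_1(U),\dots,f_m(U)\}$.

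First I would treat the driftless system \eqref{T41}: since $\dot U = -i\bigl(\sum_{j=1}^m u_jH_j\bigr)U$, the velocity at $U$ ranges over $\operatorname{span}\{iH_1U,\dots,iH_mU\}$ as the $u_j$ range over $\mathbb{R}$, so the associated object is the rank-$m$ distribution $\mathcal{D} = \operatorname{span}\{iH_1U,\dots,iH_mU\}$, where one uses that the $iH_jU$ are pointwise linearly independent (equivalently the $H_j$ are, which is the standing assumption on the control Hamiltonians). Next, for the system \eqref{T10} with drift $H_0$, the admissible velocities at $U$ form the affine set $-iH_0U + \operatorname{span}\{iH_1U,\dots,iH_mU\}$, which is exactly an affine distribution $\mathcal{D}_f = Y(U)+\mathcal{D}_U$ with $Y(U) = -iH_0U$; this matches the description $\mathcal{D}^+_x = Y(x) + \operatorname{span}\{X_1(x),\dots,X_k(x)\}$ from the Introduction, provided $-iH_0U, iH_1U,\dots,iH_mU$ are pointwise linearly independent. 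Finally, the linear system \eqref{T9}, rewritten as $\dot U = AU + \sum_{j=1}^m u_j b_j$, has admissible velocities $AU + \operatorname{span}\{b_1,\dots,b_m\}$, so the affine distribution is $\mathcal{D}_f = AU + \operatorname{span}\{b_1,\dots,b_m\}$ with drift vector field $Y(U)=AU$ and the constant frame $b_1,\dots,b_m$ spanning $\mathcal{D}$.

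The only genuine point to check — and the one I expect to be the main (minor) obstacle — is the rank/independence condition: to call these objects distributions (resp. affine distributions) of a definite rank one must verify that $\{iH_jU\}$ (resp. $\{-iH_0U\}\cup\{iH_jU\}$, resp. $\{b_j\}$) are linearly independent at every relevant point $U$. For \eqref{T41} and \eqref{T10} this follows because left translation by the invertible matrix $U\in SU(n)$ is a linear isomorphism of $\mathfrak{su}(n)$, so independence of $H_0,H_1,\dots,H_m$ in $\mathfrak{su}(n)$ transfers to independence of $-iH_0U,iH_1U,\dots,iH_mU$ in $T_USU(n)$; for \eqref{T9} it is just independence of the chosen columns $b_j$ of $B$. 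Granting this, the statement is immediate by comparing with the definitions, and I would present the proof as three short paragraphs, one per system, each simply identifying $Y$ and the spanning frame and invoking the Introduction's definitions.
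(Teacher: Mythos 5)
Your proposal is correct and matches the paper's treatment: the paper offers no separate argument for Corollary \ref{11}, since the velocity sets are already written explicitly alongside each system in (\ref{T41}), (\ref{T10}) and (\ref{T9}), and the corollary simply identifies them with the definitions of distribution and affine distribution from the Introduction, exactly as you do. Your added remark on pointwise linear independence (via left translation by the invertible matrix $U$) is a small refinement the paper leaves implicit, but it does not change the route.
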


One of the most significant roles of the sub-Finsler geometry being in understanding
the optimal trajectories as sub-Finsler geodesics, precisely, it can suitably be called sub-Randers geodesics.
Now comes the importance of the previous Corollary \ref{11}, through which we can easily construct a sub-Finslerian metric
(consequently, a sub-Finsler geometry)
on the distribution
$$ \mathcal{D} = \mathrm{span}\{iH_1U, \ldots , iH_mU\}$$
 that produce from the system (\ref{T41}), (see for instance, \cite{LA}).
Russell et al \cite{RuSt15} studied the H\"{o}rmander's condition and
showed that by using the fact that $\sum_{j=1}^{m} u_j(t)H_j$ is constrained such that
\begin{equation}\label{T13}
  h \big(i\sum_{j=1}^{m} u_j(t)H_j, i\sum_{j=1}^{m} u_j(t)H_j\big) =1, \quad \ \forall t,
\end{equation}
for some inner product $h: \mathfrak{su} \times \mathfrak{su} \to \mathbb{R}$ on $\mathfrak{su}(n)$,
the time optimal trajectories of the time evolution operator $U(t)$ are exactly the geodesics of the following right invariant Randers metric on $SU(n)$,
i.e, the time independent is the constraint and valid for all time. In addition, using some constraints
(e.g. (\ref{T13}))
the optimal trajectories for $U(t)$ are geodesics of Randers metric restricted to an
affine distribution, i.e. $\mathcal{D}_f$ on $SU(n)$. The affine distribution $\mathcal{D}_f \subset TSU(n)$
consisting of vectors of the expression (Corollary (\ref{T13}))
$$\mathcal{D}_f= -iH_0U + \mathrm{span}\{iH_1U, \ldots , iH_mU\},$$
such that $\{iH_1U, \ldots , iH_mU\}\subset \mathfrak{su}(n)$ span the subset of $\mathfrak{su}(n)$
which is $h$-orthogonal to the span of the subset of $\mathfrak{su}(n)$ spanning the 'forbidden
directions'. In the sense that $\mathcal{D}_U=\mathcal{D}U$ the distribution $\mathcal{D}$ is right invariant,
i.e. the optimal trajectories are the length minimizing geodesics that joining the given endpoints
according to a sub-Finsler metric in the sense of restrictions distribution $\mathcal{D}$, and which are parallel to the distribution $\mathcal{D}$.
Such geodesic $V(t)$ is parallel to the distribution $\mathcal{D}$ means that $\frac{dV(t)}{dt}=\mathcal{D}_{V(t)}, \ \forall t.$
Under consideration that the distribution $\mathcal{D}$ is bracket generating,
we can say the system is controllable,
that is, every unitary gate could be implemented. This provides an exact condition
for controllability in
the presence of additional constraints.
Khaneja et al in \cite{KBG01} consider
a quantum system with a drift (\ref{T10}),
left invariant on some compact group $G$.  However, all vector fields $H_j$
belong to $\mathfrak{k}$, a
Lie subalgebra of the Lie algebra $\mathfrak{g}$ of $G$.
Moreover, they chose a Cartan decomposition $\mathfrak{g} = \mathfrak{k} \oplus \mathfrak{p},$
with the standard Cartan’s commutation relations, and so, the Lie algebra generated by the $H_j$'s, $j > 1$
 is not equal to $\mathfrak{g}$ (it is only $\mathfrak{k}$).
 Therefore, in their case, to move from a point in a coset $K_0$
 to another point in a coset $K_1$ requires the use of the drift $H_0$ and hence requires a bounded
speed.
This implies that, even for unbounded controls there is a minimum time,
which is strictly larger than zero (and not attained in general), while Boscain et al in \cite{B02},
show that if we relax the constraint $u_1^2+u_2^2\leq1$ then the minimum time is zero (also not attained in
general).

Now, if we consider $G$ as the unitary group. In the system given by (\ref{T4}),
we denoted by $\mathcal{R}(I,t)$ the {\em reachable set} of all $\hat{U}\in G$ that can be achieved from identity $U(0)=I$
at the time $t$. Moreover,
 the time given by $$t^*(U_f)=\mathrm{inf}\{t\geq 0| \ U_f \in \overline{ \mathcal{R}}(I,t)\},$$
$$t^*(KU_f)=\mathrm{inf}\{t\geq 0| \ kU_f \in \overline{ \mathcal{R}}(I,t), k \in K\},$$
 is said to be {\em infimizing time}, for
producing the propagator $U_f \in G$,
where $\overline{ \mathcal{R}}(I,t)$ is the closure of the set $\mathcal{R}(I,t)$.
Note that the
reachable set from the origin is only the origin. In addition, we define the following notation
$$\mathcal{R}(I ,T):=\underset{0\leqslant t\leqslant T}{\cup}\mathcal{R}(I ,t),$$
$$\mathcal{R}(I):=\underset{0\leqslant t\leqslant \infty}{\cup}\mathcal{R}(I ,t).$$
\begin{rem}
  A system  (\ref{T4}) is said to be controllable if $\mathcal{R}(I, t) = G$  for all  $I \in G$. However, the system is accessible but not controllable
  if the system semigroup $\mathcal{S} := \mathcal{R}(I)$ has an interior point in $G$.
\end{rem}
   The following example show that system (\ref{T4}) is an accessible but not controllable:
\begin{ex}
  Assume that $M=\mathbb{R}^2$ such that the system given by
  $$\begin{bmatrix} \dot{p}_1 \\  \dot{p}_2 \end{bmatrix}=\begin{bmatrix} q^2 \\ 0 \end{bmatrix}+u \begin{bmatrix} 0 \\  1 \end{bmatrix}.$$
  The reachable set from a point $r \in \mathbb{R}^2$ is
  $$\mathcal{R}(r)=\{s \in \mathbb{R}^2\ \mathrm{such \ that} \ s_1 > r_1\} \cup \{r\}.$$ therefore,
  the system is not controllable from the initial point $(0,0)$, however accessible from the same point because
  the point not in the interior.
\end{ex}

Let us define the adjoint system given by
\begin{equation}\label{XS}
   \frac{dU}{dt}= - i\left(X_d+ \sum_{j=1}^{m} u_j(t)X_j,\right)U,\qquad X \in \mathrm{Ad}_KX_d,
\end{equation}
where $U(0)=I$ and $\mathrm{Ad}_K(X_d)$ denotes the adjoint orbit of $X_d \in \mathfrak{g}$ under $K$
 induced by conjugation as follows
$$\mathrm{Ad}_K(X_d)=\{\mathrm{Ad}_{k}(X_d)=k X_dk^{-1}, \ \mathrm{such \ that}\ k\in K\} \subseteq \mathfrak{g}.$$

On the homogeneous sub-Finsler manifold $G/K$,
 let us define a
right invariant control
system for any $P \in G$
\begin{equation}\label{XP}
  \dot{P} = XP, \qquad P(0)=U(0), \ X\in \mathrm{Ad}_KX_d,
\end{equation}
we call such a control system an {\em adjoint
control system}.
\begin{theorem}
  The infimizing time $t^*(U_f)$ for steering the system (\ref{XS}) from  $U(0)=I$ to $U_f$
  is the same as the minimum coset time needed for steering the adjoint system
\begin{equation}\label{ACS}
  \dot{P}=XP, \ \mathrm{where} \ P\in G \ \mathrm{and} \ X \in Ad_K(-X_d),
\end{equation}
from $P(0)=I$ to $KU_f$ \cite{KBG01}.
\end{theorem}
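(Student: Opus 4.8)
The plan is to decouple the fast evolution generated by the control Hamiltonians $X_j\in\mathfrak k$ from the drift, so that steering \eqref{XS} becomes, modulo $K$, the adjoint control problem \eqref{XP}--\eqref{ACS}; the second ingredient is that, since the $u_j$ are unbounded and $\{iX_j\}$ is bracket generating for $\mathfrak k$, motions inside $K$ can be performed in arbitrarily short time --- Chow's theorem \cite{Chow39} on the compact connected group $K$ together with the rescaling $u_j\mapsto\lambda u_j$, $s\mapsto s/\lambda$, which makes driftless $K$-motions as fast as desired.

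First I would factor solutions. Given an admissible trajectory $U(\cdot)$ of \eqref{XS} on $[0,t]$ with $U(0)=I$, write its control part as $C(s)=-i\sum_j u_j(s)X_j\in\mathfrak k$ and let $\Theta\colon[0,t]\to G$ solve $\dot\Theta=C(s)\Theta$, $\Theta(0)=I$; since $\mathfrak k$ is a subalgebra and $K$ is connected, $\Theta(s)\in K$ throughout (this is the mechanism behind Lemma~\ref{Le}). With $P:=\Theta^{-1}U$ and the identity $\frac{d}{ds}\Theta^{-1}=-\Theta^{-1}C$ one obtains
\[
  \dot P = \mathrm{Ad}_{\Theta(s)^{-1}}(-iX_d)\,P, \qquad P(0)=I,
\]
so that $P(\cdot)$ is an admissible trajectory of the adjoint system \eqref{ACS}: as $\Theta(s)^{-1}$ ranges over $K$ the effective drift $\mathrm{Ad}_{\Theta(s)^{-1}}(-iX_d)$ ranges over the adjoint orbit appearing there, whose structure (and that of the associated Weyl orbit and its convex hull) is described by Theorem~\ref{III}. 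Since $\Theta(t)\in K$ we have $P(t)=\Theta(t)^{-1}U(t)$.

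This already yields one inequality: if $U(t)=U_f$ then $P(t)=\Theta(t)^{-1}U_f\in KU_f$, so every time at which \eqref{XS} reaches $U_f$ is a time at which \eqref{ACS} reaches the coset $KU_f$; hence $t^*(KU_f)\le t^*(U_f)$. For the reverse inequality I would run the construction backwards. Starting from an admissible $P(\cdot)$ of \eqref{ACS} on $[0,t]$ with $P(0)=I$ and $P(t)\in KU_f$, carried by a curve $k(\cdot)$ in $K$ with $X(s)=\mathrm{Ad}_{k(s)}(-iX_d)$, set $\Theta(s):=k(s)^{-1}$; then $\dot\Theta\Theta^{-1}\in\mathfrak k$, and reversing the computation above shows $U:=\Theta P$ solves \eqref{XS} once $\Theta$ is realized by the $X_j$ (if $\{iX_j\}$ spans $\mathfrak k$ the controls $u_j$ are read off from $\dot\Theta\Theta^{-1}$; in general one approximates $\Theta$ uniformly by a driftless $K$-trajectory at the cost of an arbitrarily short extra time). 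Up to two such short fast $K$-motions at the ends --- one correcting $U(0)$ from $I$ to $\Theta(0)$, one bringing $U(t)=\Theta(t)P(t)\in KU_f$ onto $U_f$ --- this steers $I$ into any neighbourhood of $U_f$ in time $t+\varepsilon$; since $t^*(U_f)$ is defined via the closures $\overline{\mathcal R}(I,\cdot)$, letting $\varepsilon\to0$ gives $t^*(U_f)\le t^*(KU_f)$. Combining the two estimates proves the claim (cf.\ \cite{KBG01}).

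The routine parts are the factorization identity and the coset bookkeeping. The genuinely substantive step --- and the only place where ``$u_j$ unbounded'' and ``$\{iX_j\}$ bracket generating for $\mathfrak k$'' are really used --- is the no-time motion inside $K$ invoked in both directions: realizing, or $\varepsilon$-approximating with vanishing time penalty, a prescribed curve $\Theta$ in $K$ by the control Hamiltonians, and the end corrections that land $U$ exactly on $U_f$ rather than merely on the coset $KU_f$. I would isolate this as a lemma (uniform approximation of curves in $K$ by driftless trajectories, with time penalty tending to $0$, via Chow's theorem plus control rescaling), and I would also record the soft but necessary point that passing to the closures $\overline{\mathcal R}(I,t)$ does not change the infima, so that the statement genuinely concerns infimizing times --- which, as the text already notes, need not be attained.
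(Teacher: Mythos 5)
The paper does not actually prove this theorem: it is stated and attributed to \cite{KBG01}, and no argument follows in the text. Your proposal therefore cannot be compared with an internal proof; what it does is reconstruct, correctly, the standard argument from \cite{KBG01}. The factorization $U=\Theta P$ with $\dot\Theta=C(s)\Theta$, $C(s)=-i\sum_j u_j(s)X_j\in\mathfrak k$, $\Theta(s)\in K$, giving $\dot P=\mathrm{Ad}_{\Theta(s)^{-1}}(-iX_d)P$, is exactly the mechanism that turns (\ref{XS}) into the adjoint system (\ref{ACS}) modulo $K$, and the two inequalities you derive from it are the right ones. You also correctly isolate the only substantive ingredient --- that driftless motions in $K$ can be performed in time tending to zero because the $u_j$ are unbounded and $\{iX_j\}$ generates $\mathfrak k$, with the drift contributing only an $O(\varepsilon)$ error that is absorbed by the closure $\overline{\mathcal R}(I,t)$ in the definition of the infimizing time. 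Two small points you should make explicit if you write this up: in the reverse direction the curve $k(\cdot)$ carrying an admissible $X(s)\in\mathrm{Ad}_K(-X_d)$ need only be measurable, so $\Theta=k^{-1}$ need not be absolutely continuous and the controls cannot literally be read off from $\dot\Theta\Theta^{-1}$; your proposed approximation lemma (uniform approximation of curves in $K$ by driftless trajectories with vanishing time penalty, via Chow's theorem and control rescaling) is genuinely needed there, not optional. And the paper's notation conflates $\mathrm{Ad}_K(-X_d)$ with $\mathrm{Ad}_K(-iX_d)$; your consistent use of $-iX_d$ is the correct reading of (\ref{XS}).
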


\begin{rem}
  Let $U_f \in G / K$ be an arbitrary target state,
we can replace $U_f$ by any other point
$$kU_fk^{-1} =  kU_f \in  G/K, k \in  K$$
in a manner, the control problem does not change essentially.
Since the set of control variables,
i.e. $ \mathrm{Ad}_K(H_d) $ is invariant under conjugation in $K$ therefore
the optimal trajectory after such a conjugation remains optimal.
In addition, following the general  properties of symmetric
spaces, we have
$$\mathrm{Ad}_KU_f= \{ kU_f \in G/K\ \mathrm{where} \ k \in  K\}$$
has nonempty intersection with $T \subseteq G/K$ such that  $T$ is a
maximal torus in $G$. To be more specific, the following equality holds

$$\mathrm{Ad}_KU_f \cap T = \exp(\mathfrak{W}_X), \qquad \forall X \in \mathfrak{h}, \ \mathrm{with}\  \exp(X)\in \mathrm{Ad}_KU_f.$$
\end{rem}

\section{Conclusion}
The system (\ref{T4}) guarantees that all admissible motions in $G/K$ are generated by the drift
Hamiltonian $H_d$. To pick out an initial direction in $G/K$ it enough to choose a convenient $k \in K$ by means of the controls.
Practically speaking, for each point $KU_{\gamma} \in
G/K$ there is a proper subspace $\mathcal{D}_{KU_{\gamma}}$ of the directions generated by $H_d$ can be
chosen by the controls. Recall that, thanks to the Chow's theorem, we ensure that
every point in $G/K$ can be reached from any given starting point by a piecewise smooth curve determined by the controls since the subspaces $\mathcal{D}_{KU_{\gamma}}$  is bracket generating.
The set of the adjoint
orbit $Ad_K(-iH_d)$ of $-iH_d$ under the action of the subgroup $K$,
in general is not the whole of $\mathfrak{p}$, is the set of all possible directions
in $G/K$,
that means not all the directions in $G/K$ are
directly accessible, unless we achieved that by a back and forth motion in directions we can directly access.
To produce new directions of motion we can apply the noncommuting
generators that mentioned in (\ref{NG}).

This kind of problem, where one has to find the shortest curve between its endpoints in a manifold subject to the constraint that the tangent to the curve always belong to a subset of all possible directions, have
been well studied in the seance sub-Riemannian and sub-Finslerian geometries which are known as (sub-Riemannian) sub-Finslerian geodesics \cite{L.K, L.K1, LoMa00, Mo02}.
If the set of accessible directions is the set $Ad_K(-iH_d)$, then the
the problem of finding time optimal control laws minimize to find sub-Finsler geodesics in coset $G/K$ \cite{KGB02}.
Khaneja et al in \cite{KBG01} and \cite{KGB02}, show that the sub-Riemannian geodesics were computed
for the space $SU(4)/SU(2)\otimes SU(2),$ in the framework of
optimal control of coupled two-spin and three-spin systems, in particular,
if $n = 2$
spins-$1/2$ then $G/K$ takes the form of a Riemannian
symmetric space. Thus time-optimal trajectories between points in $G$ correspond to Riemannian geodesics, but
if $n > 2$ the coset $G/K$ are no longer Riemannian symmetric spaces, so
the time-optimal trajectories in $G$ denote sub-Riemannian geodesics.

\end{document}